\newtheorem{theorem}{Theorem}[section]
\newtheorem{lemma}{Lemma}[section]
\newtheorem{rem}{Remark}[section]
\renewenvironment{proof}[1][\negthickspace]{\textbf{Proof\thickspace#1.} }{\
\rule{0.5em}{0.5em}}
\makeatletter\@addtoreset{equation}{section}\makeatother
\makeatletter\@addtoreset{table}{section}\makeatother
\makeatletter\@addtoreset{figure}{section}\makeatother
\newcommand{\ckboldon}[1]{#1}
\newcommand{\ckbold}[1]{%
 \ifthenelse{\isundefined{\ckboldon}}{#1}{ \textbf{#1} }
}
\newcommand{\bs}{\boldsymbol}
\newcommand{\cred}{\color{black}}
\newcommand{\revtwo}{\color{black}}
\newcommand{\var}{\operatorname{var}}
\newcommand{\E}{\operatorname{E}}
\newcommand{\cov}{\operatorname{cov}}
\newcommand{\ls}{\leqslant}
\newcommand{\gs}{\geqslant}
\newcommand{\eps}{\varepsilon}
\newcommand{\dto}{\overset{\mathcal{D}}{\longrightarrow}}
\newcommand{\pto}{\overset{P}{\longrightarrow}}
\newcommand{\vth}{\vartheta}
\newcommand{\bS}{\boldsymbol{S}}
\newcommand{\ZZ}{{\mathbb Z}}
\newcommand{\be}{\begin{eqnarray}}
\newcommand{\ee}{\end{eqnarray}}
\newcommand{\bq}{\begin{eqnarray*}}
\newcommand{\eq}{\end{eqnarray*}}
\newcommand{\bD}{\boldsymbol{D}}
\begin{document}

\title{A novel change point approach for the detection of gas emission sources using remotely contained concentration data}
\author{Idris Eckley\footnote{Mathematics and Statistics, Lancaster University, Lancaster, LA1\,4YF, UK; \texttt{i.eckley@lancs.ac.uk} }
\and
 Claudia Kirch\footnote{ Otto-von-Guericke University Magdeburg, Department of Mathematics, Institute of Mathematical Stochastics,   Postfach 4120, 39106 Magdeburg, Germany;
\texttt{claudia.kirch@ovgu.de}}
$ $\footnote{Center for Behavioral Brain Sciences (CBBS), Magdeburg, Germany}
\and
Silke Weber \footnote{ Karlsruhe Institute of Technology (KIT), Institute of Stochastics, Englerstr. 2,76131 Karlsruhe, Germany;
\texttt{silke.weber@kit.edu}}
}

\date{\today}
\maketitle

\begin{abstract}
Motivated by an example from remote sensing of gas emission sources, we derive two novel  change point procedures for multivariate time series where, in contrast to classical change point literature, the changes are not required to be aligned in the different components of the time series. Instead  the change points are described by a functional relationship where the precise shape depends on unknown parameters of interest such as the source of the gas emission in the above example. Two different types of tests and the corresponding estimators for the unknown parameters describing the change locations are proposed. We derive the null asymptotics for both tests under weak assumptions on the error time series and show asymptotic consistency under alternatives. Furthermore, we prove consistency for the corresponding estimators of the parameters of interest. The small sample behavior of the methodology is assessed by means of a simulation study and the above remote sensing example analyzed in detail.
	\noindent
\end{abstract}

 \vspace{5mm}
 \textbf{Keywords:} non-aligned change points; epidemic model; projection methods; dependent errors; multivariate change points \vspace{5mm}

\textbf{AMS Subject Classification 2010}: 62H12; 62M10; 62G20 

 \section{Introduction}\label{sec:intro}
Change point analysis has a long and rich tradition, dating back to the work of \cite{Page1954} and \cite{Hinkley1970}.  During the last decade change point methods have attracted considerable interest, 
leading to substantial development of both methodology and diverse areas of applications. Recent surveys are given by \cite{horvath2014extensions} as well as \cite{Truong2018}, whilst \cite{Killick2012} provide a valuable resource collating recent published and software contributions.
. 

These methods are of fundamental importance in many areas, 
including econometrics \citep{aue2012sequential,hlavka2017fourier}, medicine \citep{fried2004online}, neuroscience \citep{Aston2012}, ocean-engineering \citep{NamAstonEckleyKillick2015} 
and bioinformatics \citep{Rigaill2012}. 


The challenge of detecting changes in multivariate time series has recently received growing attention. Notable contributions include \cite{Aue2009, Matteson2014, Zhang2010}. Initial research on this important problem has focussed on approaches for detecting those times at which changes occur in all series, e.g. \cite{Aue2009, Siegmund2011, Zhang2010}. More recently, several contributions have sought to relax this rather restrictive assumption, see \cite{Preuss2015} or \cite{Bardwell2018} for example. 

This article considers a different, albeit multivariate, change point setting. Specifically, the work that we describe is inspired by an application arising from remote sensing, where the changes in each component of the multivariate time series are functionally related to the changes in other series.   Remote sensing of gas emissions has been of considerable interest to researchers for a number of years. Applications range from monitoring green house gases \citep{Chen2006}, toxic gas emissions \citep{Bhattacharjee2008} and monitoring emissions from carbon storage resources \citep{Hirst2017}. In many of these examples, the primary objective is to be able to successfully locate sources of emission and quantify the emission rate(s). 

\begin{figure}[b]
\centering
\includegraphics[width=3.5in]{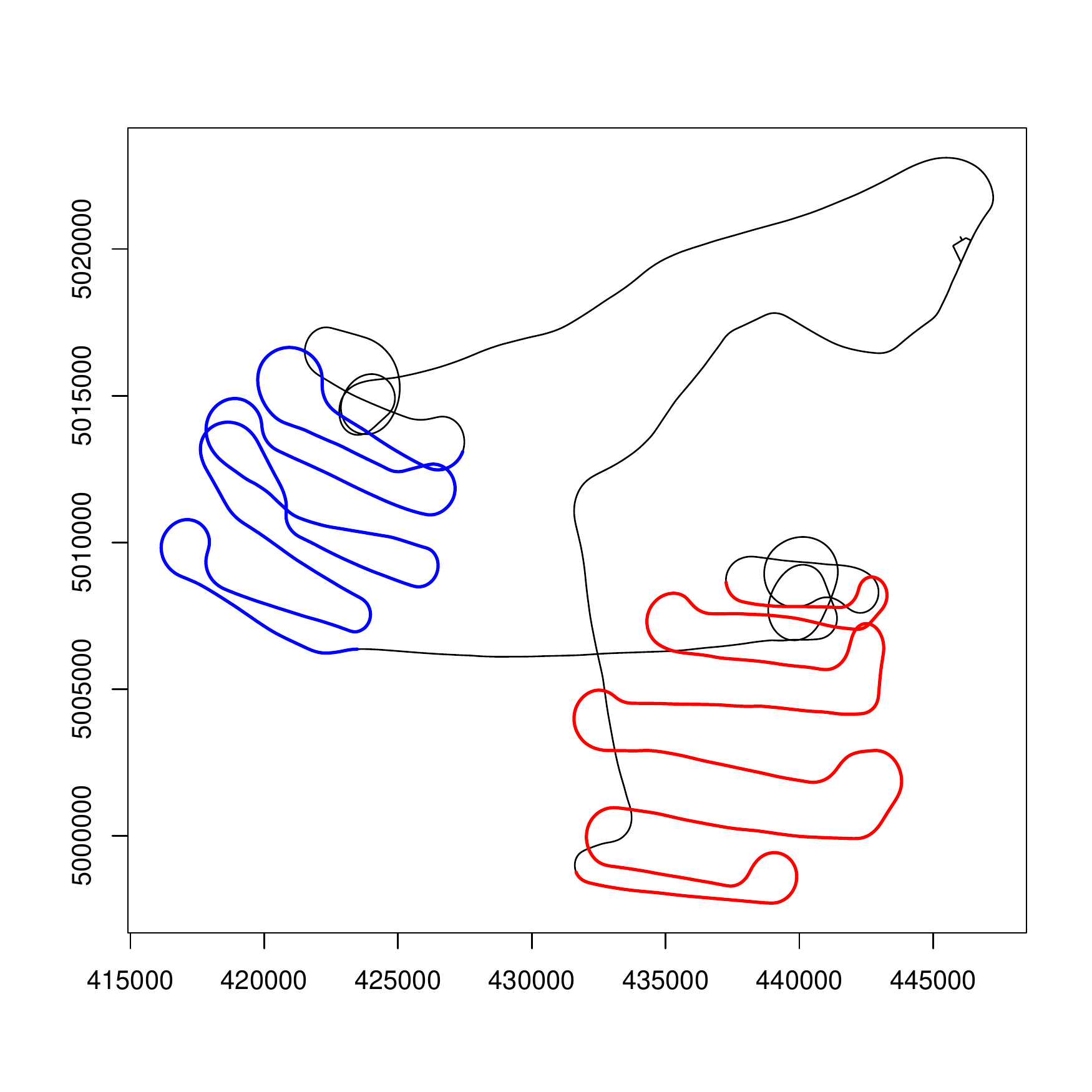}
\caption{Trajectory}
\label{trajectory}
\end{figure}

The application we consider centres on the remote detection and location of the source of gas emissions based on aerial sensed-data, as introduced in \cite{Hirst2013}. Their approach consists of an ultra-sensitive, high precision methane gas sensor mounted on an aircraft to measure a continuous stream of air from the leading edge of a wing. The sensor samples data at a high rate with GPS, radar altitude, barometric pressure, air temperature, wind velocity and several other variables. Flight data are then combined with meteorological data, including additional physical modelling attributes, including wind direction and atmospheric boundary layer depth, to estimate the shape of the plume and, thereby locate the source of the emission origin. 

%

The data that we consider, made available to us by \cite{Hirst2013}, provides a valuable test resource with known source locations. It is based on the atmospheric methane concentrations in the vicinity of two landfill sites. Specifically the data are collected by an aircraft flying at approximately 200m above ground level at a constant speed. This is well below the atmospheric boundary layer, that can constitute a `ceiling' on gases being transported from the ground. The aircraft surveys an area of approximately 40km $\times$ 40km, tracing back and forth in a snake-like fashion downwind of each landfill. Initial average wind speed and direction are also provided at multiple altitudes, see \cite{Hirst2013} for details. Figure \ref{trajectory} shows the flight trajectory in the vicinity of the landfill sites. 

\begin{figure}[b]
\centering
\includegraphics[width=5in]{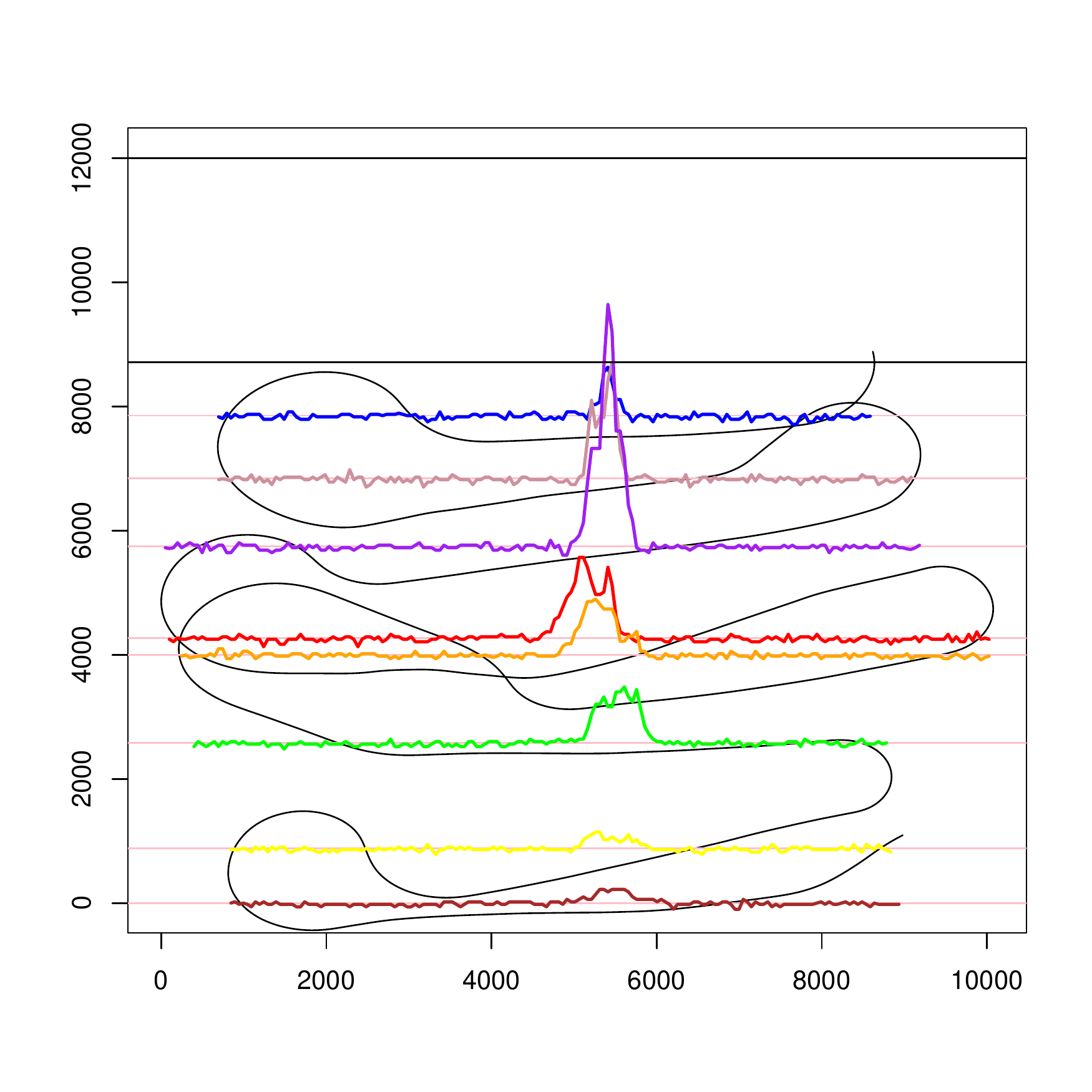}
\caption{The modified landfill data of the left-hand trajectory}
\label{Left_TransectConcentrations}
\end{figure}

Note that to avoid confounding of gas seepage when crossing the actual landfill, we only consider the data collected within the blue and red highlighted trajectory regions. 
An alternative view of the left trajectory of the data is provided in Figure \ref{Left_TransectConcentrations} plotting the methane concentrations when aligned to reference distances from the source.  {\cred The concentration data is collected discretely in time, resulting in a time series with varying length of around 200 data points in each leg as plotted in Figure \ref{Left_TransectConcentrations}. We re-register this data to form a multivariate time series with regularly spaced observations (for details we refer to ~\cite{silke_diss}, Chapter 18) before applying our methodology.}

Earlier work exploring this data set, described by \cite{Hirst2013}, sought to combine the observed gas concentration rates with idealised gas dispersion models to identify the locations of the unknown sources. Whilst effective, the method proposed requires strong assumptions on both the form of the (Gaussian) plume and about the dependence structure along the observed time series, in the form of independent and identically distributed Gaussian errors. 

In this article we develop an alternative approach that both allows for dependence in error structure along the flight path, and makes less restrictive assumptions on the plume form. Specifically, we seek to develop theory and methodology that  enable the analysis of such multivariate series, allowing for both dependence in time and a functional relationship between the location of change points across different components of the time series. We propose two different methods: The first only requires such a functional relationship generalizing the approach by \cite{horvath1999testing}, while the second one also uses approximate information about the reduction in concentration as the distance from the source increases. The latter approach has the potential to greatly increase power and hence estimation accuracy (see, e.g., \cite{aston2018}), while still being sufficiently robust with respect to a certain degree of misspecification of this concentration reduction.

The intuition that underpins our work is to view each of the aircraft transects as a time series in its own right, see Figure \ref{Left_TransectConcentrations} for an example. As such, our data is converted into a multivariate time series with each component corresponding to a transect of the flight path. 
Assuming that a given time series component (transect) includes a crossing of the plume, then one would expect to see an elevated concentration of gas in the time region that corresponds to the aircraft crossing the plume, with lower concentrations either side of the plume. Henceforth we shall refer to this region of elevated gas concentration as the \emph{change region}. In the statistical literature, situations where the mean in  an unknown interval differs from the rest of the data are called epidemic change problems  (see e.g. \citet{eeg_data}, \citet{aston2018}).
The feature that sets the gas emission data apart from other epidemic change situations is the fact that the locations are not at the same place in each component.  
Instead, due to the dispersion of the gas, it is natural to assume that the boundaries of the change regions are related to one another. The methodology which we propose seeks to encapsulate this relationship, allowing for a functional relationship that is parametrized by both known parameters (such as wind direction) and unknown parameters, e.g. the location of the source. 

The article is organised as follows. In Section \ref{sec:cpa:model}  
we give a general model description that is well suitable for the gas emission data after an appropriate preprocessing, but also allows for different examples. Section \ref{sec:cpa:test} derives and analyses two types of change point tests for the described model. While they may be of independent interest in other applications, for the purpose of the analysis of the gas emission data they are merely required as an intermediate step. In Section~\ref{section_est_cp} we derive two different estimators for the unknown source location (or more generally for the unknown parameters of the functional relationships describing the change region) and prove their consistency. Section~\ref{section_summary} summarizes the construction principles behind these tests and estimators and gives some insight into possible generalizations. Some simulations are given in Section~\ref{sec:sim_study}, while the left trajectory of the gas emission source is analyzed in detail in Section~\ref{sec:data}. Some concluding remarks can be found in Section~\ref{sec:conclusions}.
The proofs can be found in Appendix~\ref{sec:proofs} and the analysis of the right trajectory in Appendix~\ref{sec_right}.

 \section{Change point analysis}\label{sec:cpa}


 In this section, we begin by first describing a {\cred multivariate} modelling framework that takes the various attributes of the remote sensing change point problem into account.  From this we {\cred propose two different ways of aggregating information across transects that will be the basis for the proposed estimators for the location of the gas emission source. Because estimation and testing are strongly related we also provide the corresponding test procedures in Appendix~\ref{app_test}.} 
 
 In both cases, the developed theory goes beyond the motivating data example of gas emission sources.
Nevertheless, we will make the connection to the data at hand at every step, while discussing the  underlying construction principles and their consequences in more detail 
 in Section~\ref{section_summary}. 
 In so doing, we seek to 
better understand how to customize or even generalize the presented procedures to other situations and examples.

 \subsection{Model of the data}\label{sec:cpa:model}

As described in the introduction, following an appropriate transformation, the data is considered as a (dependent) multivariate time series with a different (in this example elevated) mean, within the change region of each component.
We define the change region $\{NF_{\vth_0}(i)<t\ls NG_{\vth_0}(i)\}$ in component $i$ by  a pair of change points (in rescaled time) $(F_{\vth_0}(i),G_{\vth_0}(i))$, with $F_{\vth}(i)<G_{\vth}(i)$ for all $i=1,\ldots,d$ and all $\vth\in\Theta$. Here, $\vth_0\in\Theta$ denotes the true underlying parameters, while the functional relationship between change points is  parametrized  by the functions $F_{\vth}(\cdot)$ and $G_{\vth}(\cdot)$. 
Clearly, these functions depend on both known parameters, such as the direction of the wind, and unknown parameters such as the location of the source and the opening angle of the cloud. For notational simplicity we will include the known parameters in the functional shape of $F,G$, so that $\vth\in\Theta$ are the unknown parameters only.

This leads to the following model for the data 
 \begin{equation}\label{eq_model}
	X_i(t) = \mu_i + {\Delta}_i \mathds{1}_{\{F_{\vth_0}(i) <t/N\ls G_{\vth_0}(i)\}} + e_i(t),
\end{equation}
with $i=1, ..., d$ denoting the components of the multivariate time series and $t=1, ..., N$ the time point (after transforming the flight path into a multivariate time series).
Furthermore, we assume that $\vth\mapsto F_{\vth}(i)$ as well as $\vth\mapsto G_{\vth}(i)$ are continuous for all $i=1,\ldots,d$.
The errors $\{\boldsymbol{e}(\cdot)\}$ with $\boldsymbol{e}(t)=(e_1(t),\ldots,e_d(t))^T$ are stationary and centered with existing second moments and have to fulfill a (multivariate) functional central limit theorem. In particular, they can be dependent.

This model extends the classical epidemic setting, where $\vth=(\lambda_1,\lambda_2)$, $0<\lambda_1<\lambda_2<1$ are the two unknown change points (in rescaled time) and $F_{\vth}(i)=\lambda_1$, $G_{\vth}(i)=\lambda_2$.

\begin{figure}[bt]
	\begin{center}
	{\includegraphics[width=5in]{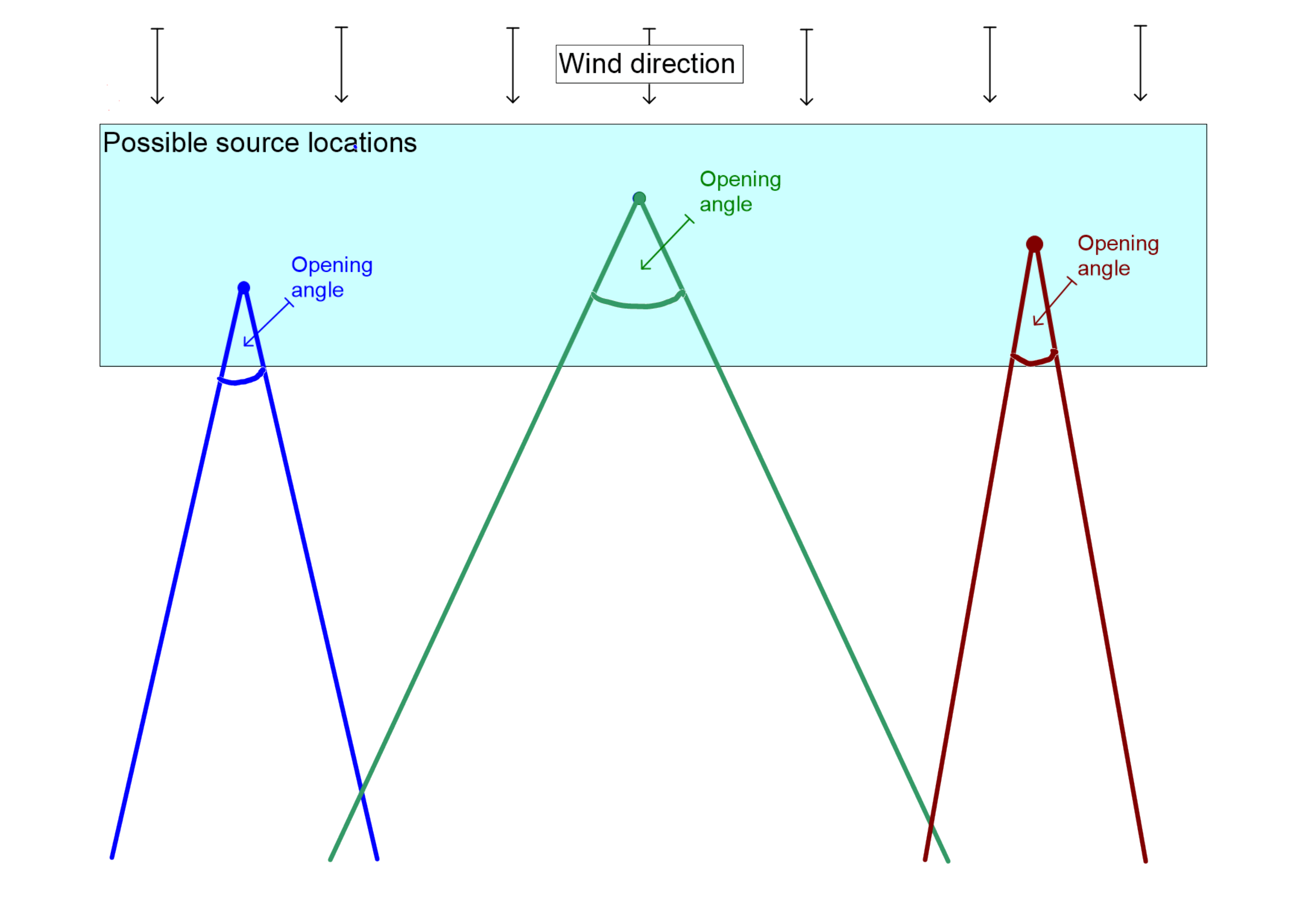}}
\end{center}
\caption{{\cred Linear plume model: Three exemplary source locations with a linear cloud having different opening angles.}}
\label{figure_lin_plume}
\end{figure}

{\cred The main example in this paper is a linear plume with known or unknown opening angle as shown in Figure~\ref{figure_lin_plume}.  The shaded field indicates possible source locations to be searched while we  indicate three possible clouds with different source locations and different opening angles. The wind direction is not included in this model because the information is already taken into consideration at the time of the collection of the data, where the flight paths are chosen to be perpendicular to the wind direction.

{\revtwo	The data consists of an 8-dimensional time series with only around 200 time points. Consequently, slightly different plume shapes  will lead to almost the same change points in each of the transects. Indeed, some  preliminary analyses have shown that both linear and Gaussian plumes lead to very similar results for the data example at hand.  As such, in order to aid clarity and model parsimony, we adopt  the simplest reasonable model in the simulation study and data analysis, namely a linear plume.
 The theoretic results obtained under model~\eqref{eq_model} are much more general and go far beyond the linear plume example by allowing for many different shapes of the cloud.

 Nevertheless, this model remains somewhat simplistic in other respects such as the assumption of a constant mean within the plume/change region, while the real data rather exhibits a gradual change.  The methodology in this paper could easily be adapted to those type of changes using the same tools by similar adjustment to that for the projection method. However, this  only leads to an improvement if the shape of the gradual change is known sufficiently well, which is typically not the case. 

 Additionally, 
 the alignment of the change points in the data example seems to deviate somewhat from any of the usual cloud shapes by being somewhat misaligned from one transect to the next, possibly caused by temporal changes in the wind direction, in particular for the right transect. We make use of this last observation by checking the robustness of our methodology  with respect to  misspecification; see Section~\ref{sec_right} in the appendix.

}
\subsection{Aggregation methodology}\label{sec:cpa:test}

{\cred 
In a multivariate context aggregating information about possible change points in different components of the time series usually leads to an improved signal-to-noise ratio. This is due to the fact that by the aggregation the signal is increased by a larger amount than the noise level  as long as the errors are not perfectly dependent. Consequently, a multivariate approach is usually preferable over several univariate approaches that are then combined later.

Therefore, we consider two different methods of aggregating information across transects that will be used for estimation purposes but can also be used for testing, as detailed in Appendix~\ref{app_test}.
}

The first {\cred approach} is related to the multivariate test statistic discussed in \citet{horvath1999testing} in the at-most-one-change situation {\cred which is obtained as a version of the likelihood ratio test statistic under  normality assumptions}. Their statistic is strongly related to the panel statistic as discussed in \citet{horvath2012change}, where the difference lies in the fact that the number of components can be similarly large or even larger than the number of time points (requiring different asymptotic considerations). 
 Our statistic is different because it (a) takes an epidemic change into account and, more importantly, that (b) we allow for general parametrizations of how the change evolves through components (by allowing for an arbitrary parametrization of the change points). 

 {\cred The multivariate approach we propose is based on the following statistic}
 \begin{align*}
	 & 
	 {\cred A^M(\vth) = }  \bS_{\vth}^T \Sigma ^{-1} \bS_{\vth},\\
	 &	 \text{where}\quad  \bS_{\vth}=(S_{\vth}(1),\ldots,S_{\vth}(d))^T,\qquad
	 S_{\vth}(i) = \sum\limits_{t=\lfloor N F_{\vth}(i)\rfloor+1}^{\lfloor NG_{\vth}(i)\rfloor} \left( X_i(t) - \frac{1}{N} \sum\limits_{l=1}^{N} X_i(l) \right),\\
	 &\phantom{\text{where}}\quad \Sigma=\sum_{h\in\ZZ}\Gamma(h), \quad \Gamma(h)=\E \boldsymbol{e}(0)\boldsymbol{e}(h)^T, h\gs 0,\quad \Gamma(h)=\Gamma(-h)^T, h<0.
 \end{align*}
 $\Sigma$ is the long-run covariance of the multivariate error sequence and can be replaced by a consistent estimator. In case of independent (across time) errors this reduces to the covariance matrix of $\boldsymbol{e}(0)$. If the dimension is even moderately large, the nonparametric estimation of the full long-run covariance matrix is statistically usually quite imprecise. {\cred See also the discussion in Remark~\ref{rem_mis_cov} in the appendix}. This is particularly problematic if the inverse of the covariance matrix is needed as is the case with the above statistic.
 {\cred  If the number of transects is large in comparison to the number of time points, then estimation errors can accumulate and identification may not be possible (\cite{BickelL2008}), where additional numerical errors may arise when inverting the matrix (see Chapter 14 in \cite{Higham2002}). The problem becomes even more difficult in the presence of time series errors (which requires the estimation of the spectrum at frequency 0) as well as under the presence of change points.}

 This is {\cred less problematic} if  $\Sigma$ has a diagonal structure, i.e.\ if the components are independent, and only the long-run variances have to be estimated. In our example, this assumption is reasonable (see Figure~\ref{figure_ACF_1}) otherwise bootstrap methods such as e.g.\ in \citet{Aston2012} can help. 
 {\cred Because the dependence between different transects seems to be very small (see Figure~\ref{figure_ACF_1} below), the latter approach is feasible even without using bootstrap methods. }
 {\cred Nevertheless, because of these difficulties  we also discuss the theoretic behavior of the testing and estimation procedures under misspecification i.e.\ allowing for inconsistent estimation of $\Sigma$ towards some positive definite matrix  $\Sigma_A$ that is not necessarily the true (long-run) covariance matrix of the errors.}

 {\cred
Additionally, the} estimation of the covariance matrix in a change point situation is complicated by the contamination by the change. For this reason, it is necessary to use the estimated errors within the estimation procedure.

{\cred Whilst the theory developed is completely general with respect to the choice of estimators of $\Sigma$, in the simulations and data example we use the following estimator:} Similarly to \cite{Aston2012}  the  errors are estimated componentwise  by
{\allowdisplaybreaks\begin{align*}
&\hat{e}_i(t) = X_i(t) - \widehat{\mu}_i - \widehat{\Delta}_i \mathds{1}{\left\{\widehat{f}_i < t \leq \widehat{g}_i \right\}},\\
&\text{where }\widehat{\mu}_i = \frac{1}{\widehat{f}_i + N - \widehat{g}_i} \left( \sum_{t=1}^{\widehat{f}_i} X_i(t) + \sum_{t=\widehat{g}_i+1}^{N} X_i(t) \right),\\
&\phantom{\text{where}}
\widehat{\Delta}_i = \frac{1}{\widehat{g}_i - \widehat{f}_i} \sum_{t=\widehat{f}_i+1}^{\widehat{g}_i}  X_i(t) - \hat{\mu}_i, \\
&\phantom{\text{where}}
\left( \widehat{f}_i,\widehat{g}_i \right) 
= \arg\max \left\{ \left| \sum_{t=f_i + 1}^{g_i} \left( X_i(t) - \overline{X}_{i,N} \right) \right|: 1 \leq f_i < g_i \leq N \right\}.
\end{align*}}

{\cred In the dependent case, } we estimate the long-run variances $\widehat{\sigma}_i^2$, $i=1,\ldots,d$, by the flat-top estimator with automatic bandwidth selection as proposed by \cite{politis2003adaptive} 
{\cred based on the estimated residuals}.
The long-run covariance matrix {\cred is then estimated} by the corresponding diagonal matrix $\widehat{\Sigma}=\mbox{diag}(\widehat{\sigma}_1^2,\ldots,\widehat{\sigma}_d^2)$.

{\cred The above way of aggregating is optimal if the change vector $\boldsymbol{\Delta}:=(\Delta_1,\ldots,\Delta_d)^T$ is allowed to be completely arbitrary. Often additional structural assumptions about $\boldsymbol{\Delta}$ are being made such as e.g.\ sparsity in the sense of many zeros. In such situations, many different approaches exist based on the idea of using a suitable projection into a lower dimensional space. For example, ~\cite{wang2018high} use a sparse singular value decomposition, \cite{cho2015multiple} use thresholding, \cite{jirak2015} uses information for each component separately, 
\cite{mei2010efficient} and \cite{wang2018thresholded}
use a set of possibilities for which components are non-zero. A theoretical discussion of the potential of using (appropriate) projection methods in a multivariate setting can be found in~\cite{aston2018}.

 In our data example we may reasonably assume some knowledge about the (relative) decay of the concentration from one transect to the other.
This information has not been taken into account by the above multivariate statistic: More precisely, we may assume to have } information about the change direction $\boldsymbol{\Delta}/ \|\boldsymbol{\Delta}\|$, {\cred where $\|\cdot\|$} is the Euclidean norm. While the exact physical decrease depends on several parameters and is difficult to know precisely, at least a rough direction is known. Specifically, the concentration might first increase (keeping in mind that the plume is actually a 3D-object so that the plane might only run into it at some distance behind the source), but then it will drop.  This information can be used to increase the signal-to-noise ratio by using a projection onto $\widetilde{\boldsymbol{\Delta}}=(\widetilde{\Delta}_1,\ldots,\widetilde{\Delta}_d)^T$, which ideally is a multiple of $\mathbf{\Delta}$. In order to obtain the best signal-to-noise ratio, the data first needs to be standardized by $\Sigma_A^{-1/2}$, which also alters the change direction (hence the  projection direction)  by a factor $\Sigma_A^{-1/2}$. An ideal choice is given by $\Sigma_A=\Sigma$, where $\Sigma$ is the true covariance matrix, which is usually difficult to obtain, hence we allow for misspecification {\cred where $\Sigma_A\neq \Sigma$ in the below theory.}
 
 If $\widetilde{\boldsymbol{\Delta}}$ is close to a multiple of the true concentration direction, the signal-to-noise ratio will greatly improve resulting in {\cred more precise estimators as well as higher testing power} (see \citet{aston2018} for more details). Projecting onto $\widecheck{\boldsymbol{\Delta}}=\Sigma_A^{-1/2}\widetilde{\boldsymbol{\Delta}}/{\|\Sigma_A^{-1/2}\widetilde{\boldsymbol{\Delta}}\|}$ yields the projected time series $\{Y(\cdot)\}$ with
 \begin{align*}
	 &\|\Sigma_A^{-1/2}\widetilde{\boldsymbol{\Delta}}\|\,	 Y(t)=\boldsymbol{X}(t)^T\Sigma_A^{-1}\widetilde{\boldsymbol{\Delta}}=  \left(\boldsymbol{\Delta}_{\{F_{\vth_0}(\cdot)<t/N\ls G_{\vth_0}(\cdot)\}}\right)^T\Sigma_A^{-1}\widetilde{\boldsymbol{\Delta}}+\boldsymbol{e}(t)^T\Sigma_A^{-1}\widetilde{\boldsymbol{\Delta}},
\\
&\text{where}\quad \boldsymbol{X}(t)=(X_1(t),\ldots,X_d(t))^T, \quad \boldsymbol{\mu}=(\mu_1,\ldots,\mu_d)^T \text{ and}\\
&\boldsymbol{\Delta}_{\{F_{\vth_0}(\cdot)<t/N\ls G_{\vth_0}(\cdot)\}}=\left({\Delta}_{\{F_{\vth_0}(\cdot)<t/N\ls G_{\vth_0}(\cdot)\}}(1),\ldots,{\Delta}_{\{F_{\vth_0}(\cdot)<t/N\ls G_{\vth_0}(\cdot)\}}(d)  \right)^T,
\\
&\text{with }{\Delta}_{\{F_{\vth}(\cdot)<t/N\ls G_{\vth}(\cdot)\}}(i)=\begin{cases}
	\Delta_i, &F_{\vth}(i)<t/N\ls G_{\vth}(i),\\
	0, &\text{otherwise}.
\end{cases}
 \end{align*}
We define
\begin{align*}
	&	Y(t)=\mathbf{D}^{{\Delta},\widetilde{\Delta}}_{\vth}(t)+e_P(t),\\
&\text{where }\mathbf{D}^{\Delta,\widetilde{\Delta}}_{\vth}(t):= \frac{\left(\boldsymbol{\Delta}_{\{F_{\vth_0}(\cdot)<t/N\ls G_{\vth_0}(\cdot)\}}\right)^T\Sigma_A^{-1}\widetilde{\boldsymbol{\Delta}}}{\|\Sigma_A^{-1/2}\widetilde{\boldsymbol{\Delta}}\|},\qquad {e}_P(t):=\frac{\boldsymbol{e}(t)^T\Sigma_A^{-1}\widetilde{\boldsymbol{\Delta}}}{\|\Sigma_A^{-1/2}\widetilde{\boldsymbol{\Delta}}\|}.
\end{align*}

By the above assumptions the function $\vartheta \mapsto \mathbf{D}^{{\Delta},\widetilde{\Delta}}_{\vartheta}(\cdot)$ is continuous and $t \mapsto \mathbf{D}^{{\Delta},\widetilde{\Delta}}_{\vartheta}(t)$ is left-continuous.

In the classical epidemic setting with $\vth=(\lambda_1,\lambda_2)$ and $F_{\vth}(i)=\lambda_1$, $G_{\vth}(i)=\lambda_2$, the projection time series also has an epidemic change. However, in the general model~\eqref{eq_model}  it exhibits a gradual (epidemic) change (see Figure \ref{abrupt_vs_gradual_change}). More precisely, if $\widetilde{\boldsymbol{\Delta}}$ is correct and a diagonal covariance matrix $\Sigma_A$ is used, then the signal part has the following shape (multiplied by a constant indicating the strength of the change) in rescaled time $s=t/N$:
 \begin{align*}
	 \mathbf{D}^{\widetilde{\Delta},\widetilde{\Delta}}_{\vth}(s)=\|\Sigma_A^{-1/2}\widetilde{\boldsymbol{\Delta}}\|\,\sum_{i=1}^d{\widecheck{\Delta}}_i^2{\mathds{1}}_{\{F_{\vth}(i)<s\ls G_{\vth}(i)\}}.
 \end{align*}
\begin{figure}[b]
\begin{subfloat}[Simulated multivariate data under $H_1$, i.e.\ with abrupt epidemic mean changes in every component.]
{\includegraphics[width=3in]{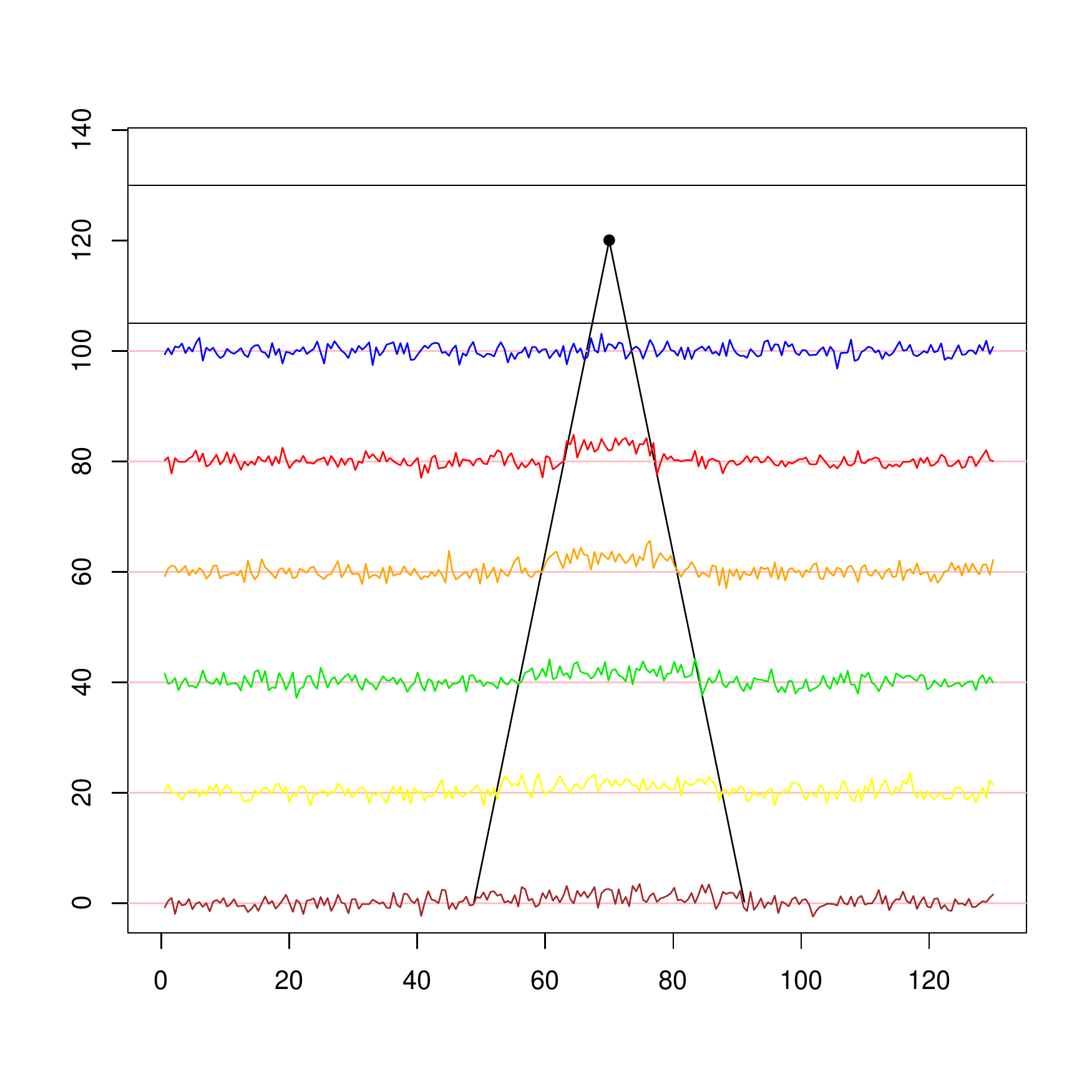}}
\end{subfloat}
\begin{subfloat}[Resalting univariate sequence with a gradual epidemic mean change.]
{\includegraphics[width=3in]{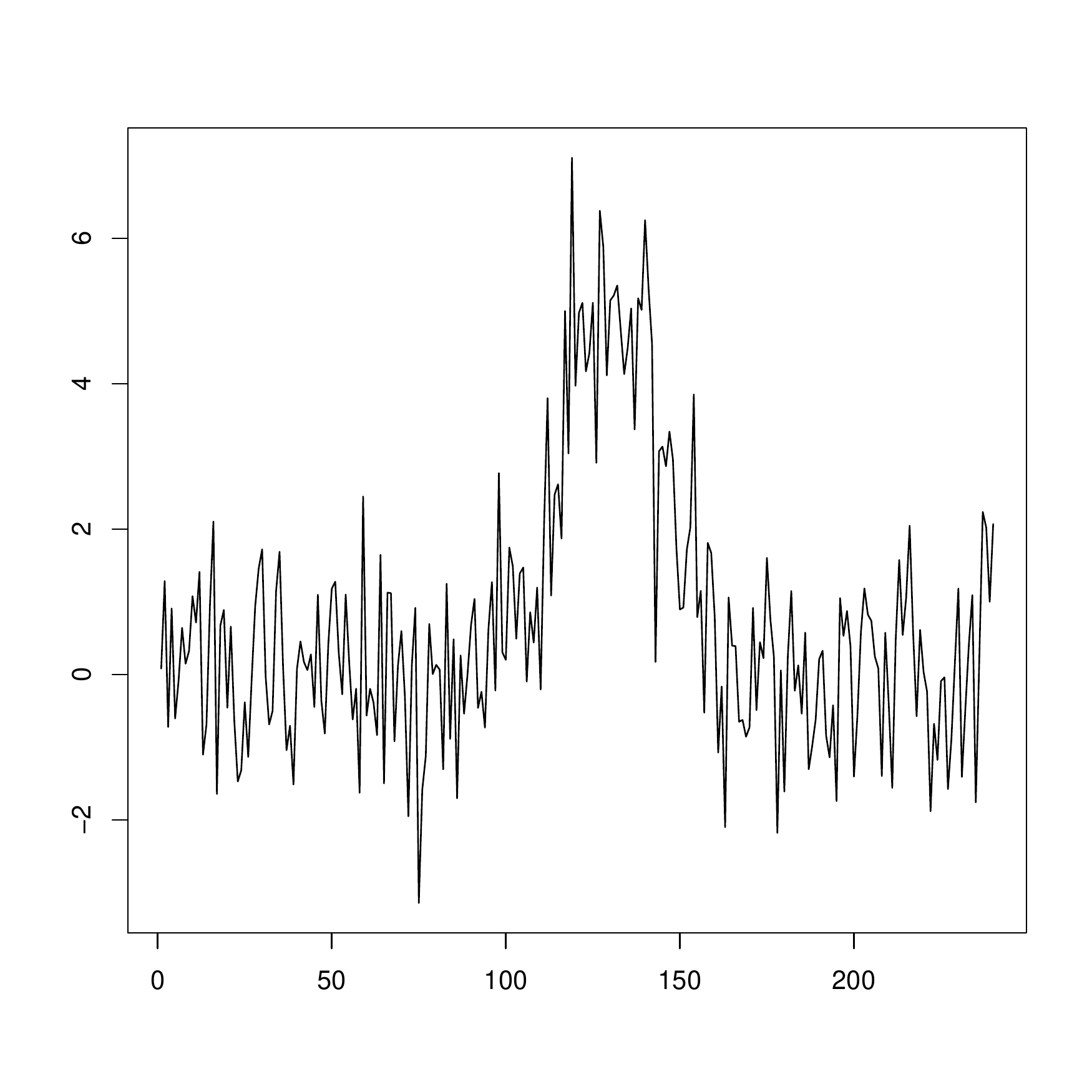}}
\end{subfloat}
\caption{Simulated multivariate data $\bs{X}(t)$ under $H_1$ and resulting univariate sequence $Y(t)$.}
\label{abrupt_vs_gradual_change}
\end{figure}

{\cred The projection-based aggregation thus results in} 
\begin{align*}
	{\cred A^P(\vth)} &= \left| \sum\limits_{t=1}^{N} \left( \mathbf{D}_{\vth}(t/N) -\frac{1}{N} \sum\limits_{l=1}^{N} \mathbf{D}_{\vth}(l/N) \right) Y(t) \right|
	 =  \left| \sum\limits_{t=1}^{N}  \mathbf{D}_{\vth}(t/N) ( Y(t)-\bar{Y}_N) \right|,
\end{align*}
where $\mathbf{D}_{\vth}=\mathbf{D}_{\vth}^{\widetilde{\Delta},\widetilde{\Delta}}$. {\cred We note that this statistic is related to the one by \citet{extreme_gradual} and \citet{limitdistr_gradual} that was obtained as the likelihood ratio statistic for a (non-epidemic) gradual change with a given polynomial slope.}

{\cred
	Based on these two version of aggregating information across different transects we derive estimators for the source location in the next section.  In the context of change point detection there is a strong connection between estimators and tests in the sense that often test statistics are obtained as the maximum over all possible parameters, $\vth$, while estimators are obtained as the point $\vth$ that  maximizes the corresponding test statistic. Also if a test statistic has a large power for a given alternative then the corresponding estimator will typically be more precise.  Therefore, in Appendix~\ref{app_test} we detail properties of the corresponding test statistics, part of which are also needed to prove consistency of the corresponding estimators.

}

\subsection{Estimation of change points/gas emission source}\label{section_est_cp}

 In classical  change point procedures, such as the ones discussed in the previous section,  natural estimators for the location of the change point can be obtained by looking at the point where the maximum is obtained.
 Similarly, in the setting which we consider, the parameter maximizing the statistic is an estimator for the true parameter value:
 \begin{align*}
	 &	 \widehat{\vth}_M={\cred\arg\max_{\vth\in\Theta}A^M(\vth)}=\arg\max_{\vth\in\Theta}\bS_{\vth}^T \Sigma ^{-1} \bS_{\vth},\\
&\widehat{\vth}_P=
{\cred\arg\max_{\vth\in\Theta}  \frac{A^P(\vth)}{\left(\sum_{t=1}^N\left(\mathbf{D}_{\vth}(t/N) -\frac 1 N\sum_{l=1}^N\mathbf{D}_{\vth}(l/N)\right)^2  \right)^{1/2}}}\\
&=\qquad
\arg\max_{\vth\in\Theta}  \frac{\left| \sum\limits_{t=1}^{N}  \mathbf{D}_{\vth}(t/N) ( Y(t)-\bar{Y}_N) \right|}{\left(\sum_{t=1}^N\left(\mathbf{D}_{\vth}(t/N) -\frac 1 N\sum_{l=1}^N\mathbf{D}_{\vth}(l/N)\right)^2  \right)^{1/2}},
 \end{align*}
 where $\arg\max$ is the set of all maximizing values. {\cred In practise some representative is used. The normalization of the projection estimator is necessary in order to obtain consistent results for the source estimation in this gradual (after projection) situation.}
 In particular we obtain an estimate for a plume, where the true source can be expected to be close to the origin of that plume. 
 {\cred However, identifiability in a small sample situation can be weak: E.g.\ in the gas emission example only relatively few data points per transect are observed so that many different clouds will cut each transect at almost the same locations. While each of the corresponding clouds segments the data reasonably, the actual source locations may vary by a much larger margin. In the data example, this effect can be seen by looking at the heat maps  in  Figures \ref{left_traj}(c) and (d) as well as \ref{right_traj_2} where the value of the statisic is very similar along vertical 'lines' in the source area. In this case, a source location higher up in combination with a slightly smaller opening angle results in a very similar segmentation of the data.

 }

 {\cred As pointed out in Section~\ref{sec:cpa:test} correct estimation of $\Sigma$ in particular in a time series/change point context may be difficult. Therefore, we explicitely allow for misestimation in the below theorem by 
letting $\widehat{\Sigma}$ converge to some matrix $\Sigma_A$ that can be different from $\Sigma$.}
 \begin{theorem}\label{theorem_est}
	 Let the assumptions on the errors of Theorem~\ref{theorem_null} hold. Furthermore, choose $\Theta$ such that $\vth_0$ is identifiable, i.e.\ there does not exist $\vth_1\neq\vth_0\in\Theta$ such that $F_{\vth_1}(i)=F_{\vth_0}(i)$ as well as $G_{\vth_1}(i)=G_{\vth_0}(i)$ for all $i=1,\ldots,d$ with $\Delta_i\neq 0$. Then, under a fixed alternative as in \eqref{eq_model}  with $\Delta_i\neq 0$ for at least one $i=1,\ldots,d$, it holds:
\begin{enumerate}[(a)]
	\item If $\widehat{\Sigma}\pto \Sigma_A$ for some diagonal positive definite matrix $\Sigma_A$ (not necessarily equal to $\Sigma$), the estimators based on the multivariate statistic are consistent, i.e.\ $\widehat{\vth}_M\pto \vth_0$.
	\item Let the true parameter be identifiable from the projected signal in the sense that there does not exist $\vth_1\neq \vth_0$ such that $\mathbf{D}_{\vth_1}=a \mathbf{D}_{\vth_0}+b$ for some constants $a,b$. Then, if the projection direction $\widetilde{\Delta}$ (but not necessarily $\Sigma_A$)  and the cloud shape are correct, the estimators based on the projection statistic are consistent $\widehat{\vth}_P\pto \vth_0$.
\end{enumerate}
 \end{theorem}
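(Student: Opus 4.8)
The plan is to treat both estimators as M-estimators and apply the standard argmax-consistency argument: normalise each criterion so that it converges, uniformly in $\vth\in\Theta$, to a deterministic limit, and then show that this limit is uniquely maximised at $\vth_0$. Compactness of $\Theta$ (the bounded search region), continuity of $\vth\mapsto(F_\vth,G_\vth)$ and the uniform convergence then force every maximiser of the sample criterion to converge to $\vth_0$ in probability. Since the $\arg\max$ is unchanged under the constant rescalings I introduce, it suffices to work with the rescaled criteria.

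For part (a), I would first determine the population limit of $A^M(\vth)/N^2$. Writing $\ell_i=G_{\vth_0}(i)-F_{\vth_0}(i)$ and using \eqref{eq_model}, the signal part of $S_{\vth}(i)/N$ converges to $s_i(\vth)=\Delta_i\bigl(|(F_\vth(i),G_\vth(i)]\cap(F_{\vth_0}(i),G_{\vth_0}(i)]|-\ell_i\,(G_\vth(i)-F_\vth(i))\bigr)$, while the contribution of the errors is $O_P(N^{-1/2})$ after division by $N$. Since $\widehat\Sigma\pto\Sigma_A=\diag(\sigma_{A,1}^2,\ldots,\sigma_{A,d}^2)$, this yields $A^M(\vth)/N^2\pto M(\vth)=\sum_{i=1}^d s_i(\vth)^2/\sigma_{A,i}^2$. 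The key identification step is the componentwise observation that, over all single candidate intervals $(F_\vth(i),G_\vth(i)]$, the quantity $s_i(\vth)^2$ is uniquely maximised at the true interval: a short calculus argument shows $s_i(\vth)/\Delta_i$ ranges in $[-\ell_i\max(F_{\vth_0}(i),1-G_{\vth_0}(i)),\,\ell_i(1-\ell_i)]$, and because the complement of a proper subinterval is not itself a single interval, the positive extreme $\ell_i(1-\ell_i)$ (attained only at the true interval) strictly dominates in absolute value whenever $0<F_{\vth_0}(i)<G_{\vth_0}(i)<1$. Hence each summand with $\Delta_i\neq0$ is maximised only when $(F_\vth(i),G_\vth(i))=(F_{\vth_0}(i),G_{\vth_0}(i))$, and the identifiability hypothesis then gives $M(\vth)<M(\vth_0)$ for all $\vth\neq\vth_0$.

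For part (b) the projected series satisfies $Y(t)=D^\ast(t/N)+e_P(t)$ with $D^\ast=\mathbf{D}^{\Delta,\widetilde\Delta}_{\vth_0}$, so the centred numerator of the projection criterion converges, after division by $N$, to the centred inner product $\langle \mathbf{D}_\vth-\overline{\mathbf{D}}_\vth,\,D^\ast-\overline{D^\ast}\rangle$, while the denominator behaves like $\sqrt N\,\|\mathbf{D}_\vth-\overline{\mathbf{D}}_\vth\|$; the error cross-term is again $O_P(N^{-1/2})$. Thus the normalised projection statistic divided by $\sqrt N$ converges uniformly to $P(\vth)=|\langle \mathbf{D}_\vth-\overline{\mathbf{D}}_\vth,\,D^\ast-\overline{D^\ast}\rangle|/\|\mathbf{D}_\vth-\overline{\mathbf{D}}_\vth\|$. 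By Cauchy--Schwarz, $P(\vth)\ls\|D^\ast-\overline{D^\ast}\|$ with equality iff $\mathbf{D}_\vth=a\,D^\ast+b$ for constants $a,b$. Because $\widetilde\Delta$ is a multiple of $\Delta$ (correct projection direction), one checks directly that $\mathbf{D}_{\vth_0}=\mathbf{D}^{\widetilde\Delta,\widetilde\Delta}_{\vth_0}$ is a multiple of $D^\ast$, so the bound is attained at $\vth_0$; the identifiability-from-the-projected-signal hypothesis (no $\vth_1\neq\vth_0$ with $\mathbf{D}_{\vth_1}=a\mathbf{D}_{\vth_0}+b$) then makes $\vth_0$ the unique maximiser.

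The main obstacle is the uniform stochastic control, i.e.\ showing that the error contributions to both criteria are $o_P(1)$ \emph{uniformly in} $\vth$ after the appropriate rescaling. Since the candidate intervals and the weight function $\mathbf{D}_\vth$ both depend on $\vth$, this is an empirical-process statement rather than a pointwise one; I would obtain it from the (multivariate) functional central limit theorem assumed for $\{\boldsymbol e(\cdot)\}$ in Theorem~\ref{theorem_null}, which controls the partial-sum process $N^{-1/2}\sum_{t\ls Ns}\boldsymbol e(t)$ uniformly in $s$, combined with the equicontinuity of $\vth\mapsto(F_\vth,G_\vth,\mathbf{D}_\vth)$ on the compact set $\Theta$ (a summation-by-parts/Riemann-sum argument converting the $\vth$-dependence into suprema of this process). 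The uniform convergence of the signal parts is a routine Riemann-sum limit. Once uniform convergence and the two identification statements above are in place, the argmax theorem delivers $\widehat\vth_M\pto\vth_0$ and $\widehat\vth_P\pto\vth_0$.
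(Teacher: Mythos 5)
Your proposal is correct and follows essentially the same route as the paper: uniform convergence of the suitably rescaled criteria to deterministic limits (the paper's Lemma~\ref{lemma_alt}, proved via the FCLT and summation by parts), identification of $\vth_0$ as the unique maximiser via the interval-overlap/signal function $h_{\vth}(i)$ for the multivariate case and via Cauchy--Schwarz for the projection case, and then the standard argmax argument. Your overlap formulation $s_i(\vth)$ is just the paper's $\Delta_i h_{\vth}(i)$ written out, and your explicit check that the negative extreme of $s_i(\vth)/\Delta_i$ is dominated in absolute value by $\ell_i(1-\ell_i)$ is if anything slightly more careful than the paper's treatment of the absolute values.
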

 For a linear plume and diagonal $\Sigma_A$, as is assumed in the data example and  simulation study, the identifiability condition in the above theorem holds as soon as there is a change in at least two components. However, as we shall see,  in the true data example with  small $N$  and varying $y$-coordinate of the source location, the difference is very small. Consequently, it follows that the surface of $\bS_{\vth}^T \Sigma ^{-1} \bS_{\vth}$ is very flat along the $y$-axis.  This is clearly seen by the heatmaps (for the possible sources) of the statistics for the gas emission data example in  Figures \ref{left_traj}(c) and (d) as well as \ref{right_traj_2}.

 {\cred The following remark gives some additional insight into the effect of misspecification or misestimation of the covariance structure on the estimation of the source location.}
 \begin{rem}\label{rem_est}
	 \begin{enumerate}[(a)]
		 \item
	 The assertion for the multivariate procedure also holds for non-diagonal covariance matrices $\Sigma_A$ as long as the true source location is the unique maximizer of the signal $\|\Sigma_A^{-1/2}\mathbf{H}_{\vth}\|$ with $\mathbf{H}_{\vth}=(H_{\vth}(1),\ldots,H_{\vth}(d))^T$ and $H_{\vth}(i)=\Delta_i \,h_{\vth,\vth_0}(i)$ as in Lemma~\ref{lemma_alt}.
 \item If the cloud or the projection direction is misspecified, then the assertion for the projection statistic only holds in the sense that the best approximating source will be estimated (if identifiable unique), where the best approximating parameter $\vth_1$ is obtained as the maximizer of
	 \begin{align*}
		 \int_0^1\widetilde{\mathbf{D}}_{\vth_1}(z)\widetilde{\mathbf{D}}_{\vth_0}^{\Delta,\widetilde{\Delta}}(z)\,dz,\qquad \text{where }\,\widetilde{\mathbf{D}}=\frac{\mathbf{D}-\int_0^1\mathbf{D}(z)\,dz}{\left( \int_0^1(\mathbf{D}(z)-\int_0^1\mathbf{D}(s)\,ds)^2\,dz \right)^{1/2}},
	 \end{align*}
	 where for the misspecified cloud $\mathbf{D}_{\vth_0}^{\Delta,\widetilde{\Delta}}$ is the projected signal obtained from the correct cloud shape and change $\Delta$ when using the projection direction $\widecheck{\Delta}$, while $\mathbf{D}_{\vth_1}$ is the projected signal based on the supposed change direction and cloud shape that is also used in the statistic. 
	 \end{enumerate}
 \end{rem}

 \subsection{Another look at the construction of the estimators}
	\label{section_summary}
	The model described in the previous section along with the testing procedures and estimators  go far beyond this particular data example. For this reason, we will discuss the construction of the proposed estimators as well as their strengths and weaknesses in this section.  
 
The methodology developed in this work is of potential use in those change point situations where a reasonable parametrization of a functional  relationship between change points of different components is available, and which may depend on unknown parameters (like e.g.\ the precise shape of the cloud or its opening angle in the gas emission example). Below, we therefore shed some more light on how to customize or generalize the presented procedures to other situations and examples. 

In this work we have considered the case of epidemic changes, where in each component there are precisely two changes and the mean of the first and last section are equal. One can easily extend this to situations of at most one change, where an example is   a recession evolving through different branches of economy with the recession not hitting all of them at the same time.  In such cases,  the function $F_{\vth}=0$ needs to be set to be equal to zero in all procedures.

Both proposed procedures rely on the functional relationship $F_{\vth}$ as well as $G_{\vth}$, which must be specified in advance {\cred motivated by the given data set at hand. However, by allowing for additional unknown parameters in the procedure such as e.g.\ a range of opening angles in the gas emission example, various possible shapes can be incorporated into the analysis.}
Nevertheless, there are several problems attached to this: On the one hand a precise estimation is only possible if the true functional relationship is included in this scenario, so that one may want to use a large number of parameters $\vth$. On the other, this can make both estimation and testing much more difficult as the true signal can more easily be hidden beneath some random false signal. In testing, this means the quantiles of the null distribution may increase significantly, requiring a stronger signal for detectability. In estimation, precision may be lost due to random fluctuations. {\cred This effect can clearly be seen by comparing the upper panel in Figure~\ref{figure_ck_1} with the lower panel. All four pictures are based on the same signal strength but in the upper panel the true opening angle has been used while in the lower panel  a range of opening angles is considered.} This effect is related to the usual trade-off between parametric and non-parametric methods.

Looking more closely at the assumptions required for consistency of the estimators,  an identifiability assumption is needed in the sense that every parameter $\vth$ leads to a unique signal in terms of change points. For the projection statistic this requirement is stronger as the  projected signal contains less information than the original multivariate signal. On the other hand, there is usually also less noise, which is an advantage. In practice, this identifiability issue shows by having several different sets of parameters that yield almost the same value of the statistic. This is true, in particular, if the number of parameters is large, i.e.\ fewer assumptions on the functional relationship of the change points are made. For the heat maps for the statistic at several different source locations as e.g.\ given in the last two panels of Figure \ref{left_traj} this can be seen by the large areas where the value of the statistic is particularly high. This is due to the fact that the data can similarly well be approximated by several sets of parameters (all leading to different but similar cloud shapes).

Both proposed procedures require an estimation of the inverse of the covariance matrix which is typically challenging in practice. The corresponding change point estimation problem is usually quite robust with respect to this, as also suggested by our theoretical results under misspecification. However, the testing procedure may suffer greatly. This is particularly bad for the multivariate procedure, where the Brownian bridges in the limit distribution are no longer independent, and critical values obtained from the independence assumptions are no longer valid. The consequence would be potentially dishonest (both conservative or liberal) testing procedures. The projection test on the other hand is much more robust in this respect, as the size is unaffected by dependence between components but it may suffer some power loss.

The main difference between the two procedures discussed in this work is that the projection method makes use of more information, requiring some knowledge about the functional relationship of the change points but also their relative strengths in each component (the absolute strength $\|\Delta\|$ does not matter). In many situations, such as in our data example, where some information about the diffusion of the gas can be used, such knowledge is available. We note that this information is not used by the multivariate statistic and, as a consequence, the signal-to-noise ratio of the projected time series is better so that both the power and the estimation precision increases. On the other hand, problems can also arise if the relative strength of the change in each component is misspecified. However, in our simulations we found the estimator of the clouds to be quite robust with respect to some mild to moderate misspecification of the relative strength of the change in each component. In the present paper, we have only worked with a  precisely known decay, as an alternative one could consider to let it depend on unknowns as well.

\section{Simulations and data analysis}
 \subsection{Some simulations}\label{sec:sim_study}
 In this section, we illustrate the small sample properties of the above procedures by a small simulation study.

\begin{figure}[b]
\begin{subfloat}[Multivariate $\widehat{\theta}_M$]
{\includegraphics[width=0.48\textwidth]{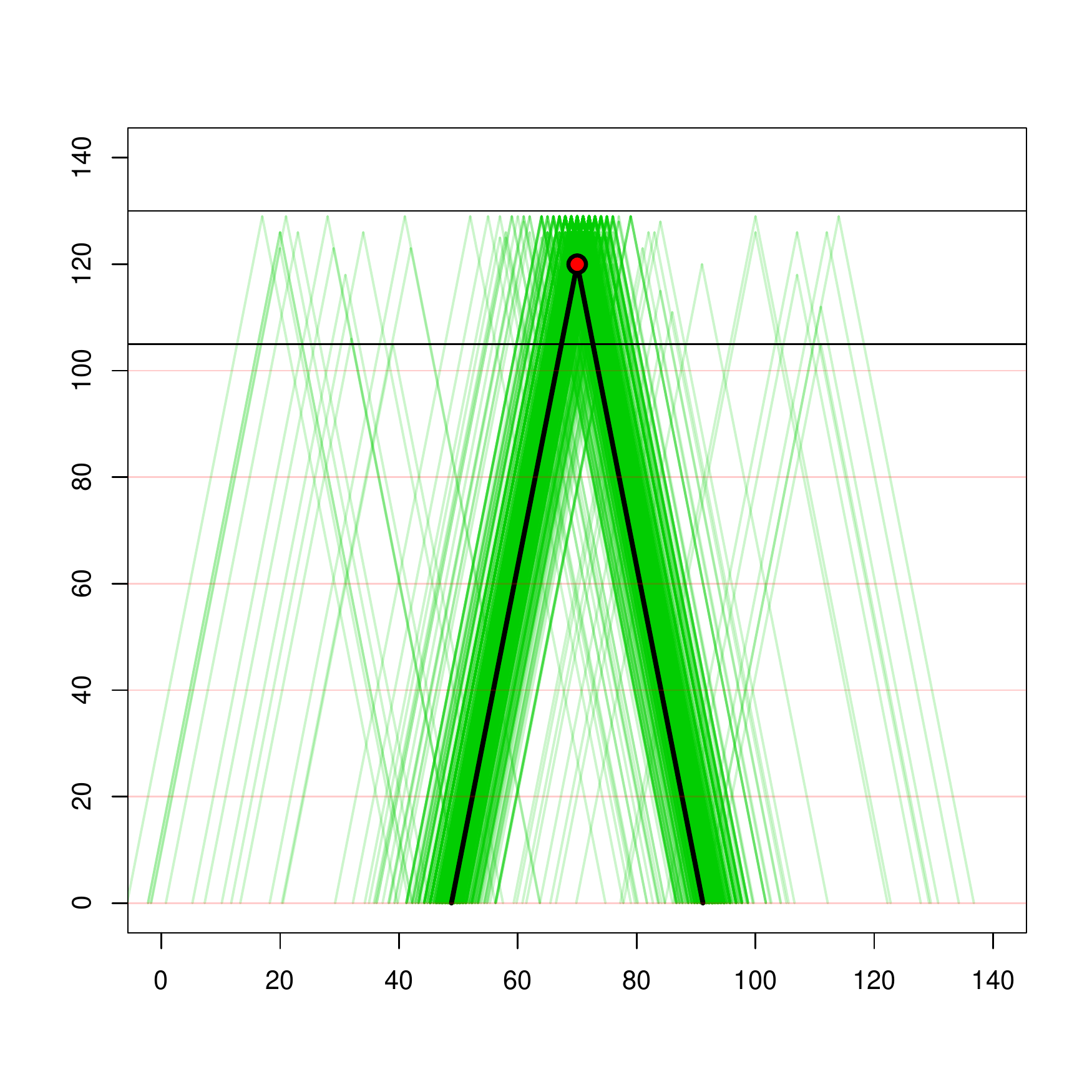}}
\end{subfloat}
\begin{subfloat}[Projection: $\widehat{\theta}_P$]
{\includegraphics[width=0.48\textwidth]{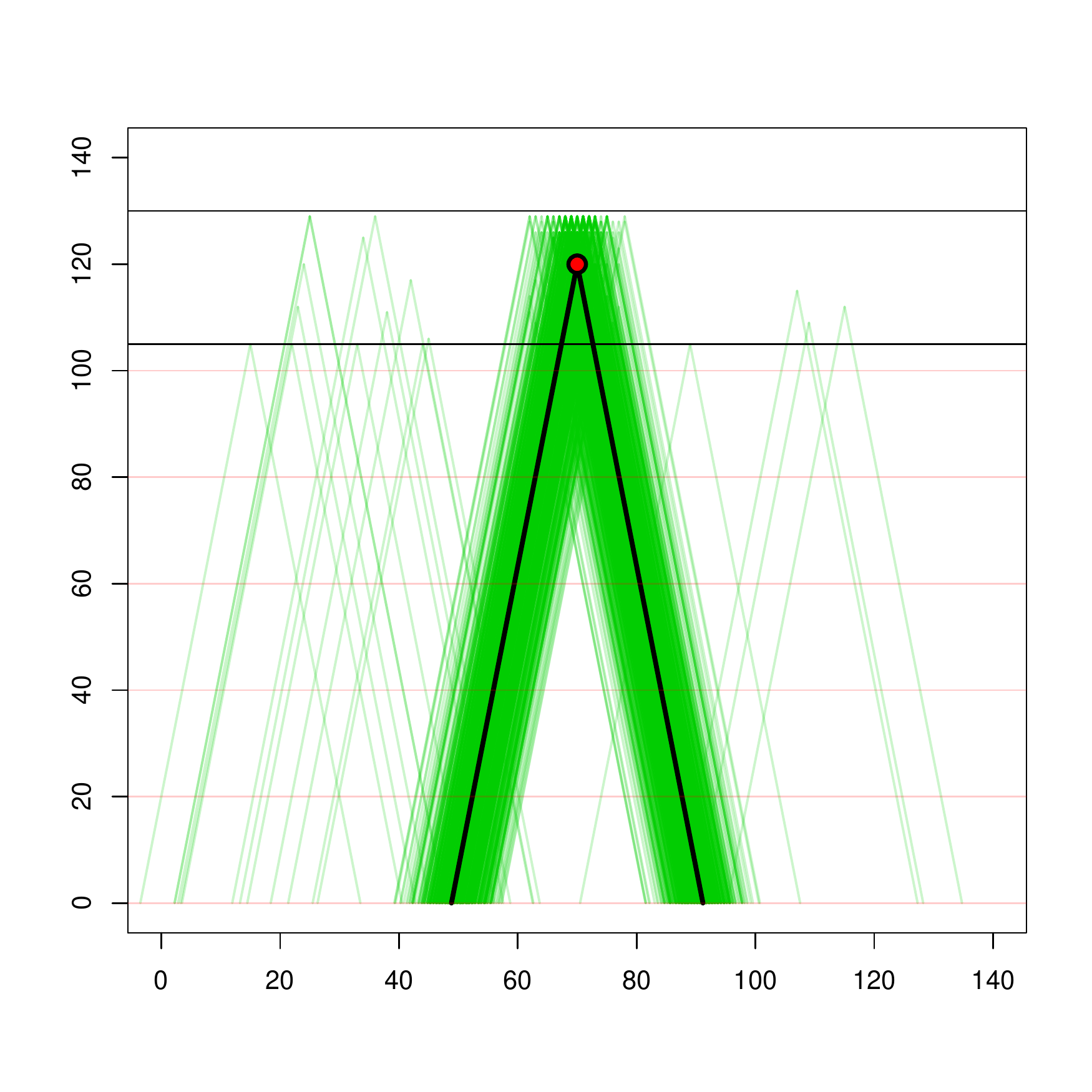}}
\end{subfloat}
\begin{subfloat}[Multivariate $\widehat{\theta}_M$]
	{\includegraphics[width=0.48\textwidth]{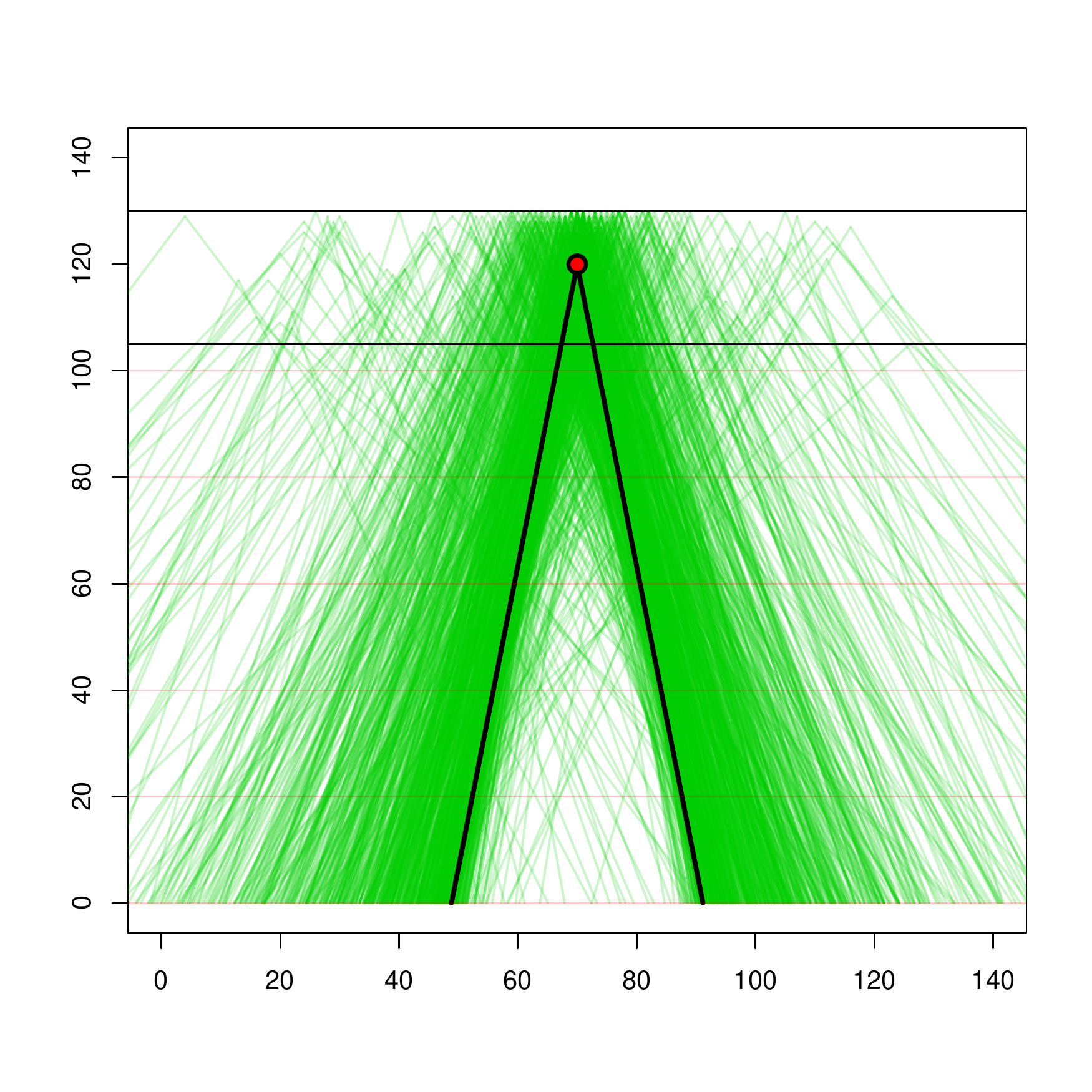}}
\end{subfloat}
\begin{subfloat}[Projection: $\widehat{\theta}_P$]
	{\includegraphics[width=0.48\textwidth]{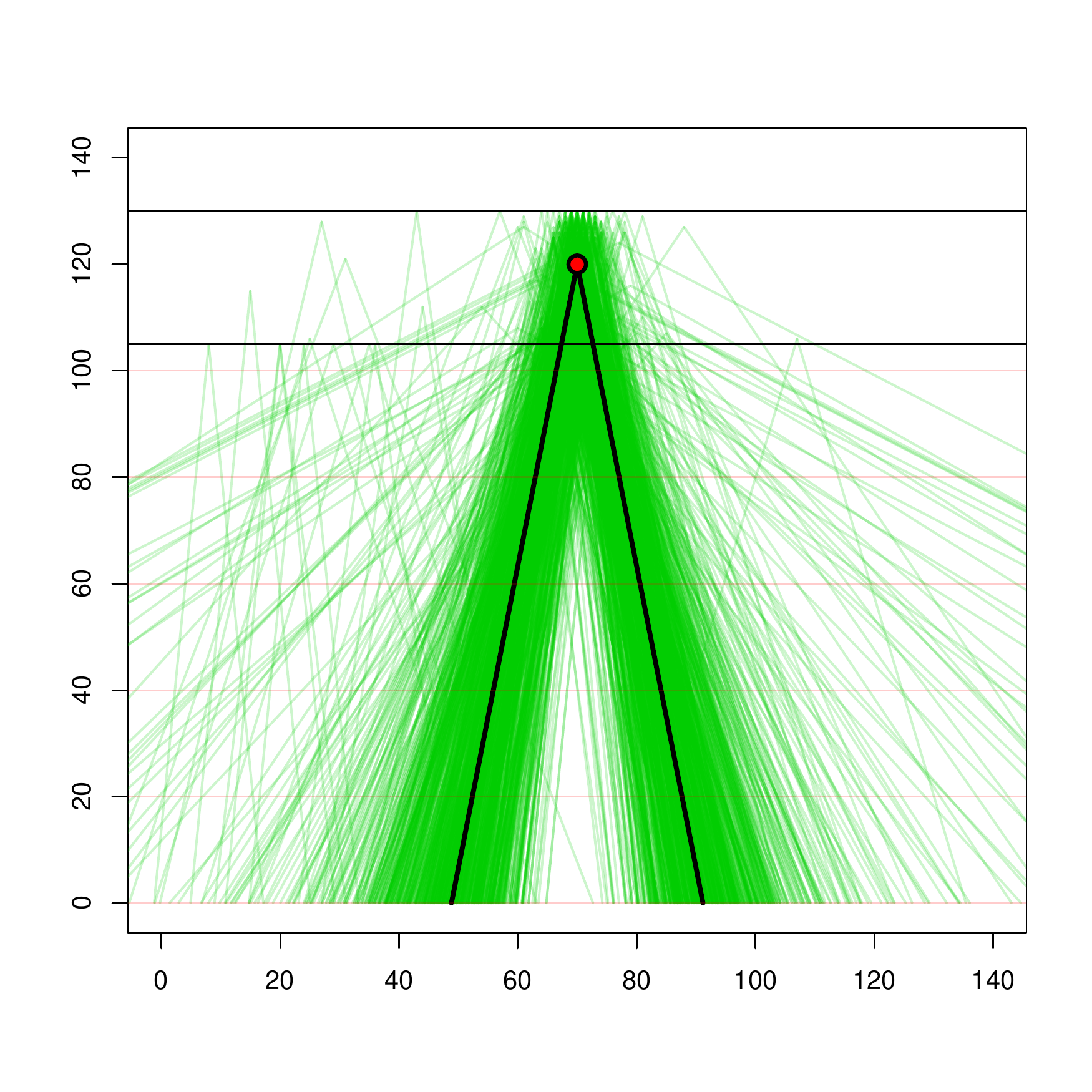}}
\end{subfloat}
\caption{Estimated clouds {\cred for i.i.d.\ data. Upper panel: Fixed opening angle  of 20 degrees, lower panel: Allowing for opening angles between $10$ and $120$ degrees.}}
\label{figure_ck_1}
\end{figure}

\begin{figure}[bp]
\begin{subfloat}[ $\tau=0.1$]
{\includegraphics[width=0.48\textwidth]{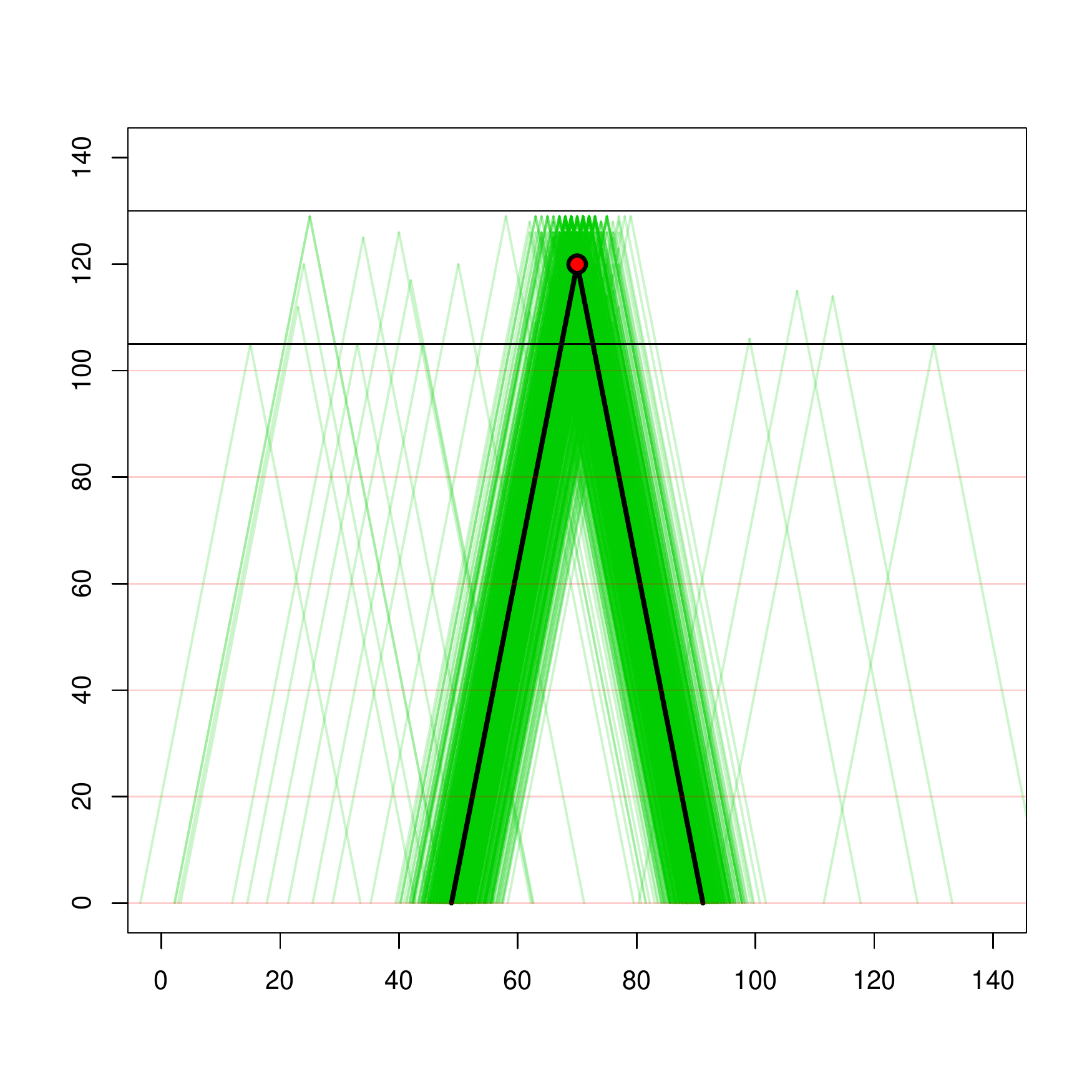}}
\end{subfloat}
\begin{subfloat}[$\tau=0.3$]
{\includegraphics[width=0.48\textwidth]{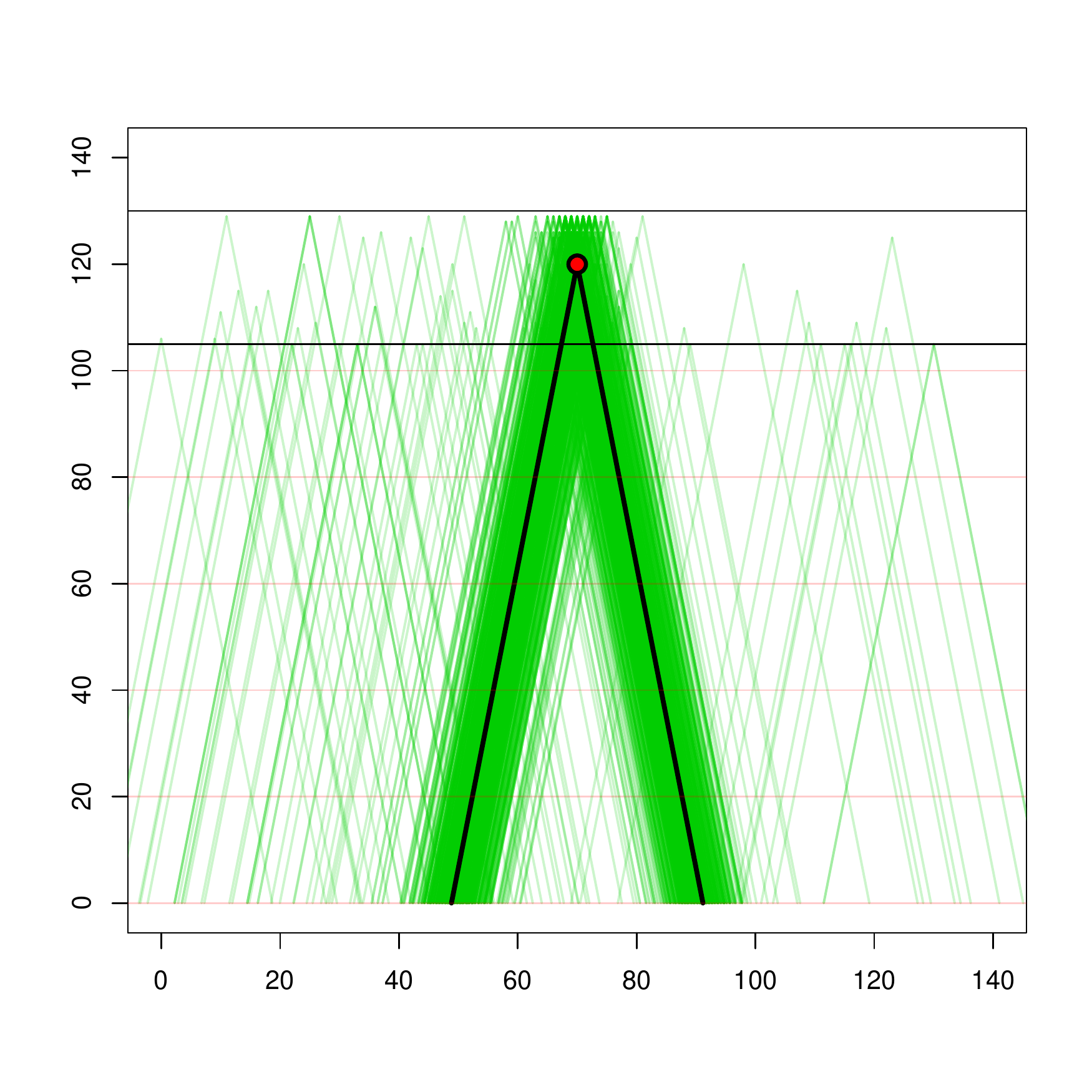}}
\end{subfloat}
\caption{Estimated clouds {\cred for the projection statistic under misspecification of the change direction by normal errors with standard deviation $\tau$}. The same signal strength as in Figure~\ref{figure_ck_1} has been used.}
\label{figure_ck_2}
\end{figure}

 {\cred We first focus on simulations supporting the theoretical observations in Section~\ref{section_summary}, which can best be seen for independent and identically distributed errors and by using the true variances. In a second step, time series errors are simulated with a similar autocorrelation structure as the estimated residuals from the data example. Additionally, the long-run covariances are estimated as described in Section~\ref{sec:cpa:test}, so that the simulated data is treated in exactly the same way as the gas emission data.  All simulations are based on Gaussian data.} Under the alternative, an epidemic mean change whose boundaries develop according to a linear cloud (with an opening angle of $20^\circ$), $d=6, N=240$, {\cred is simulated}. 

 {\cred The magnitude of the change is generated as follows:}
A plane flying in a certain height over the cloud will only enter it fully at a certain distance to the source keeping in mind that the cloud is a 3-D-object. Consequently, we simulate the magnitude of the change points $\Delta_j$, $j=1,\ldots,d,$ such that it first increases quickly before decreasing again at a slower rate. This effect can also be clearly seen in the data (see Figure~\ref{Left_TransectConcentrations}).  

 More results dealing with different signal strengths and weight functions as well as size and power of the corresponding test procedures can be found in Section 17 of \citet{silke_diss}.


{\cred Our main aim is to judge the quality of the estimated cloud including the variability of the estimator. The source itself may only be very weakly identifiable because there are only relatively few data points at each transect. Thus, clouds belonging to several different potential source points may lead to similar change point locations in each of the transects. This effect can also be seen by the vertical lines in the heatmaps of the data example (Figures~ \ref{left_traj}(c) and (d) as well as \ref{right_traj_2}). At each of those potential source locations clouds with varying opening angles exist which have a similarly good fit to the data. For this reason, we visualize the quality of the estimated clouds instead by plotting the estimated clouds from all $1000$ simulations in one plot together with the true cloud. }

Figure~\ref{figure_ck_1} (a) and (b) give the results under the assumption of a known fixed opening angle. 
The estimators from the projection statistic are {\cred somewhat} more precise than from the multivariate procedure, {\cred i.e.\ there are fewer estimated clouds at the wrong locations}. As discussed in Section~\ref{section_summary} the projection statistic -- unlike the multivariate statistic -- uses the additional information about the direction of the change $(\Delta_1,\ldots,\Delta_d)^T$ (up to multiplicative constants indicating the strength of the signal). In the gas emission example, this corresponds to having knowledge about the relative decline of the gas as the airplane gets further and further away from the source.

In Section~\ref{section_summary} it was also discussed how precision of the corresponding estimators ({\cred as well as power of the corresponding test statistics}) can be diminished by allowing for more flexibility in the parameters defining the cloud. We will check this effect empirically by not working with a fixed known opening angle but rather treat the angle as another unknown quantity. Effectively, this means that we are no longer only maximizing over the source location but also over the opening angle resulting in a substantial increase in computational effort.  {\cred The corresponding simulation results are given in Figure~\ref{figure_ck_1} (c) and (d). In this case, the estimators for the cloud become much less precise if applied to the same time series, such that a stronger signal is needed to obtain the same precision. Indeed, while the strength of the signal remains the same by allowing for more flexibility in modelling by means of an unknown opening angle, the noise level of the statistic is greatly increased, where clearly the multivariate statistic is affected more strongly.
 Thus,} both methods have, as one might expect, a much diminished quality of estimation.  

 {\cred This situation also shows that the gain in precision from the use of the projection statistic can be substantial } due to the use of the additional information of  the {\cred change direction i.e.\ the} decay of the gas concentration with distance from source. 
 
 {\cred In order to  check for robustness of the projection procedure with respect to misspecification of that change direction, we contaminate the true signal strength in each transect} by i.i.d.\ normal disturbances, i.e.\ the true change in component $i$ is given by $\Delta_i+\eps_i$, $\eps_i\sim N(0,\tau^2)$ i.i.d., $\|\mathbf{\Delta}\|=1$, while the projection statistic is still constructed with $\Delta_i$. {\cred We use $\tau=0.1$ as well as $\tau =0.3$, which is already a substantial contamination because the magnitude of the change in each transects lies between $0.22$ and $0.62$ so that the contamination is of similar magnitude than the signal.} The results  are given in Figure~\ref{figure_ck_2} showing that the procedure is indeed quite robust with respect to at least slight to medium deviations from the truth.

\begin{figure}[b]
\begin{subfloat}[Multivariate $\widehat{\theta}_M$]
{\includegraphics[width=0.48\textwidth]{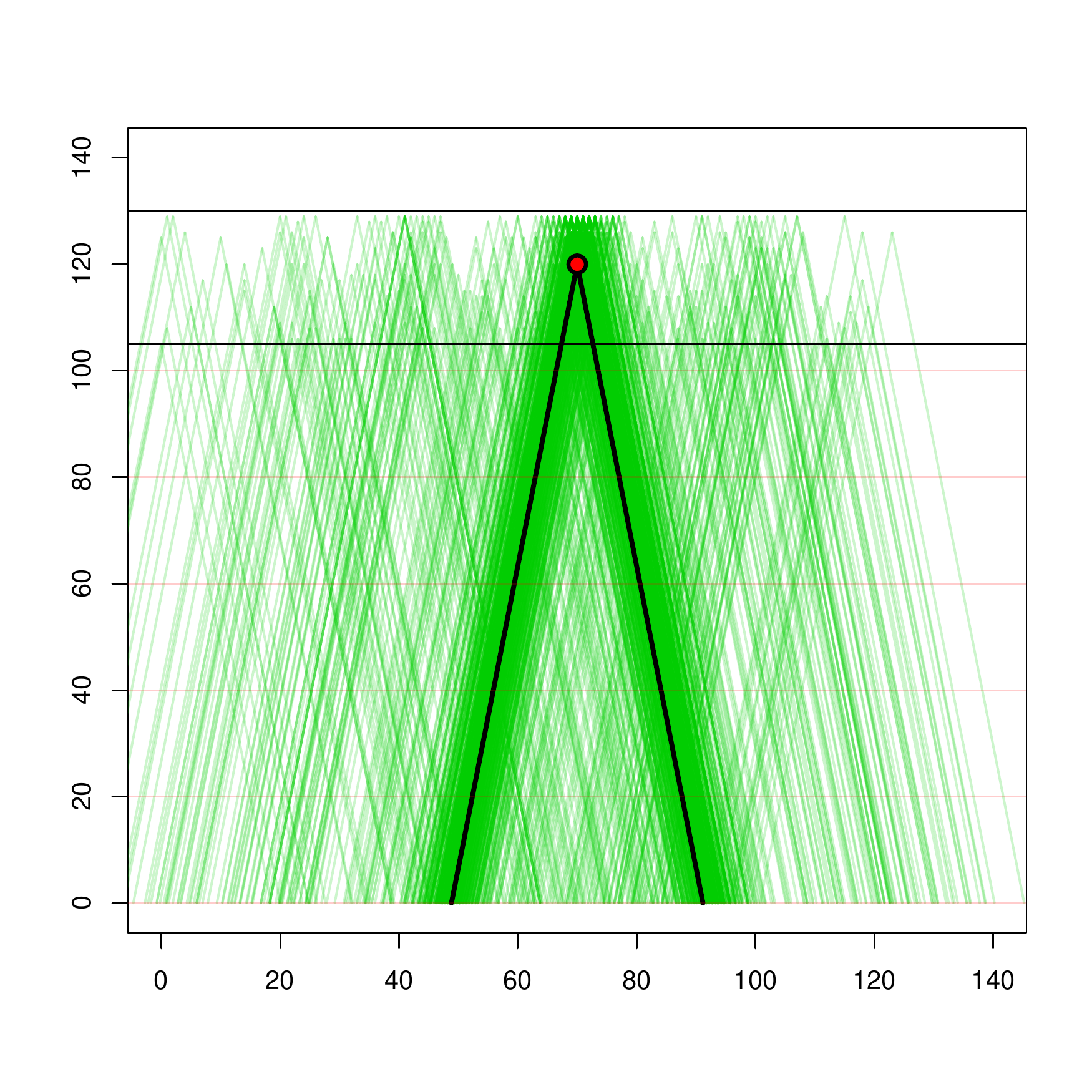}}
\end{subfloat}
\begin{subfloat}[Projection $\widehat{\theta}_P$]
{\includegraphics[width=0.48\textwidth]{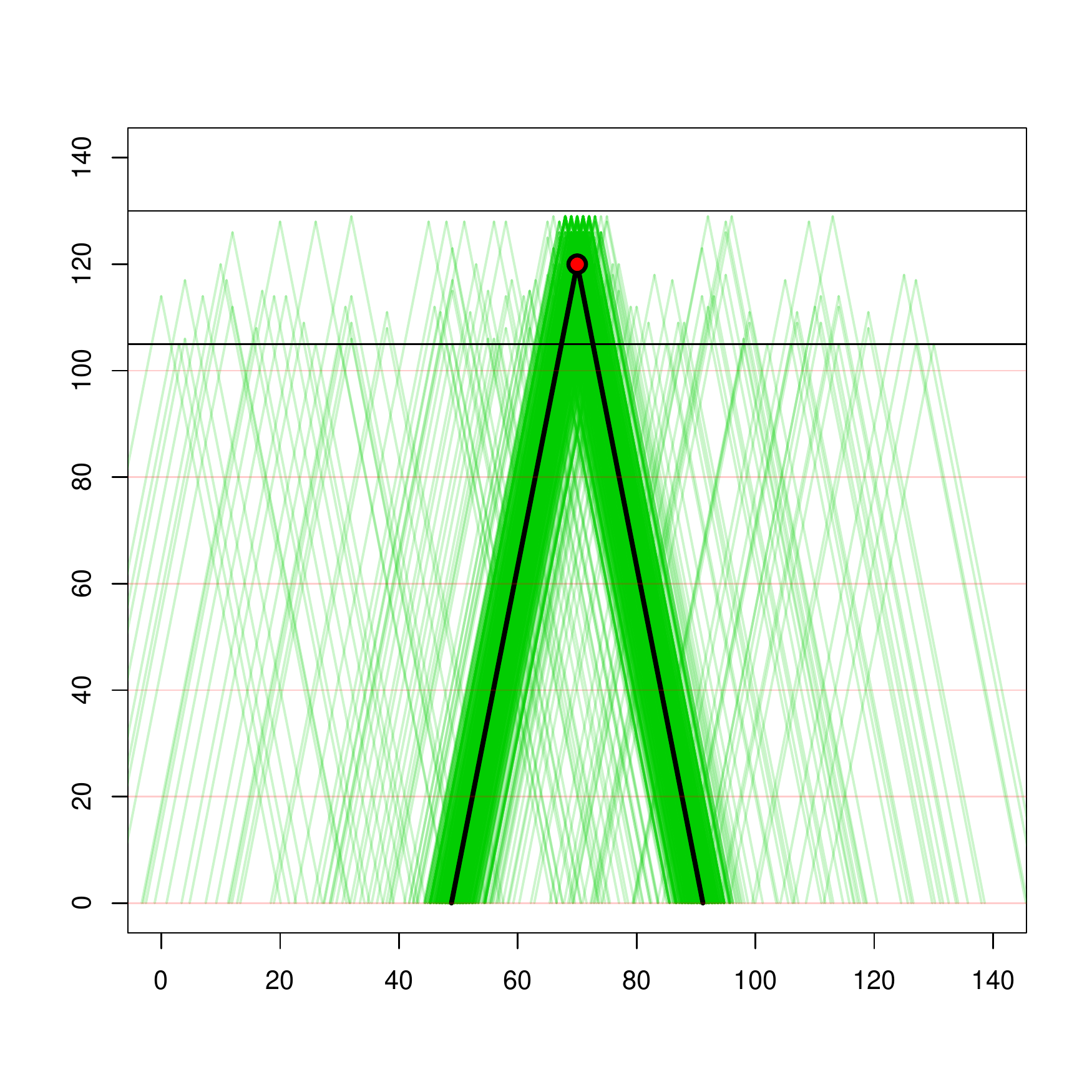}}
\end{subfloat}
\begin{subfloat}[Multivariate $\widehat{\theta}_M$]
{\includegraphics[width=0.48\textwidth]{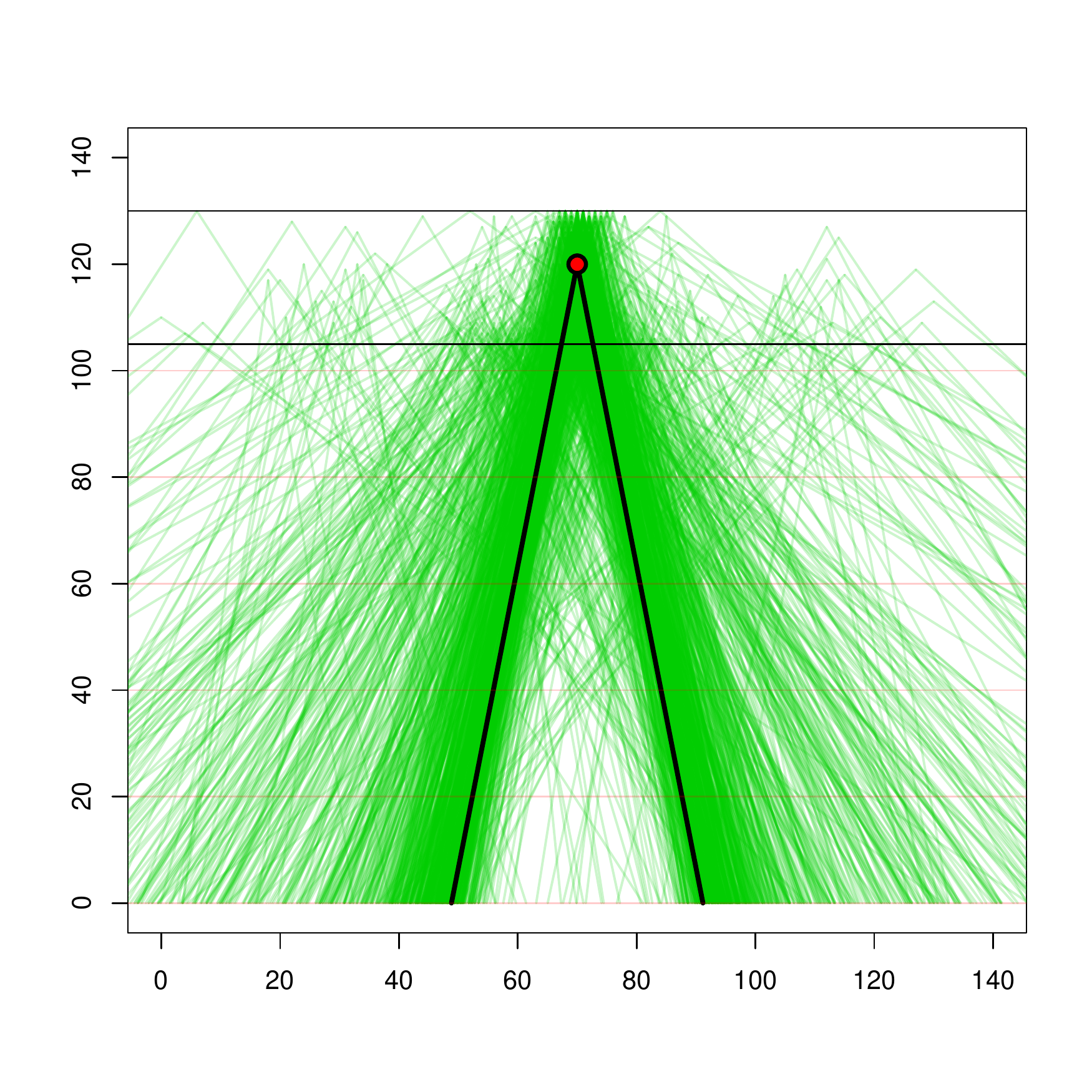}}
\end{subfloat}
\begin{subfloat}[Projection $\widehat{\theta}_P$]
{\includegraphics[width=0.48\textwidth]{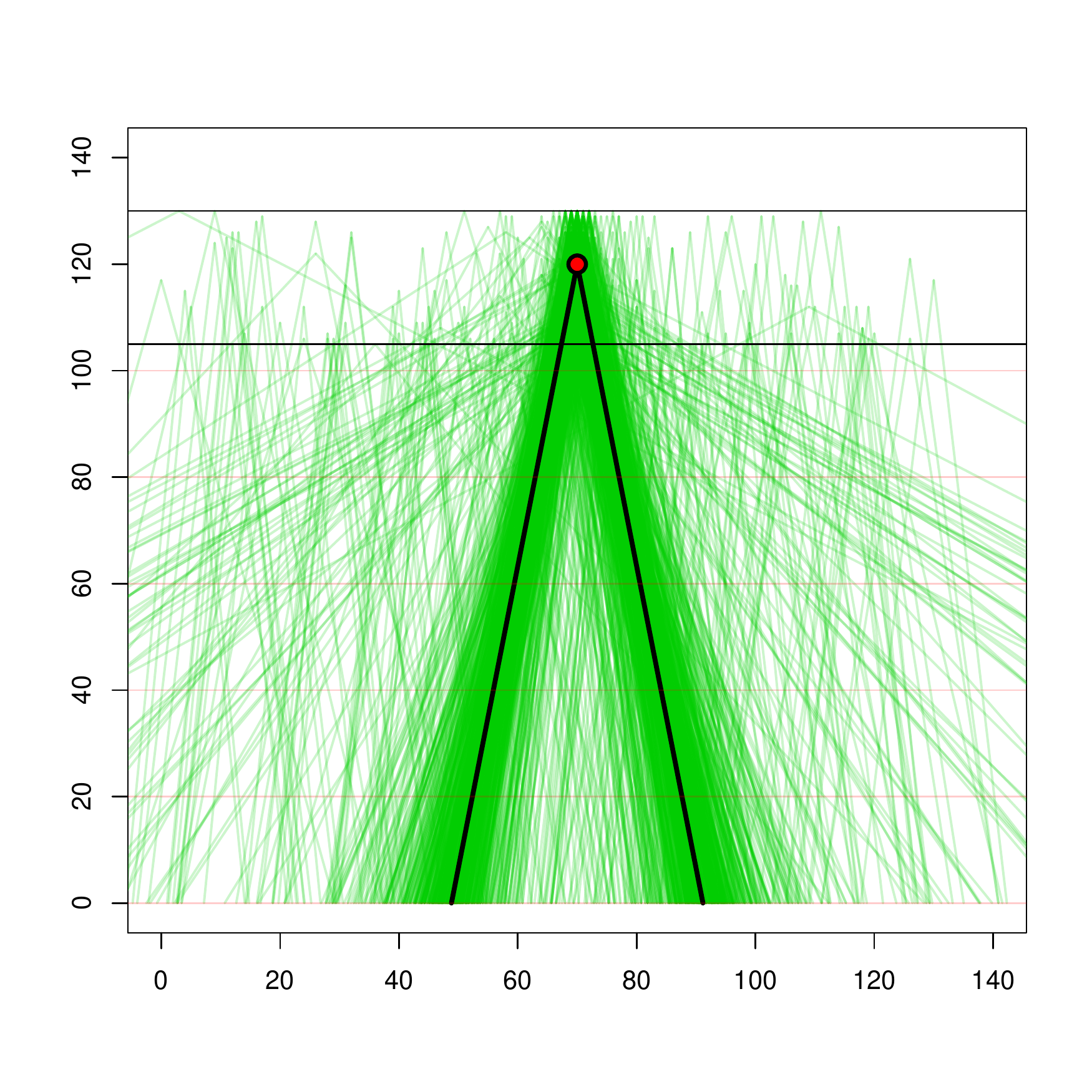}}
\end{subfloat}

\caption{Estimated clouds for dependent data: {\cred Upper panel: Fixed opening angle of $20$ degrees, lower panel: allowing for opening angles between $10$ and $120$ degrees}}
\label{figure_dep}
\end{figure}

{\cred
	In order to assess the effect of dependence on the  procedure, we use the following dependent model: Each transect is generated independently as the following MA(9) model with standard Gaussian white noise: $X_t=e_t+0.3 e_{t-1}+0.2e_{t-2}+0.1 e_{t-3}-0.1 e_{t-5}-\ldots -0.5 e_{t-9}$, which was chosen because its autocorrelation and partial autocorrelation structures look similar to what we have seen in the actual data (see Figure~\ref{figure_ACF_1} for the corresponding plots for the first transect).

	Because the noise level in this model is higher than for the above independent case and because we estimate the long-run variances, we use a stronger signal of $\|\mathbf{\Delta}\|=3$.  	The results can be found in Figure~\ref{figure_dep}. Clearly, the precision is again better for the projection than the multivariate procedure. As before, precision is better if the true opening angle is known as opposed to having to estimate the opening angle as well.
}

 \subsection{Data analysis}\label{sec:data}

 \begin{figure}[t]
\begin{subfloat}[ACF: First leg]
{\includegraphics[width=0.24\textwidth]{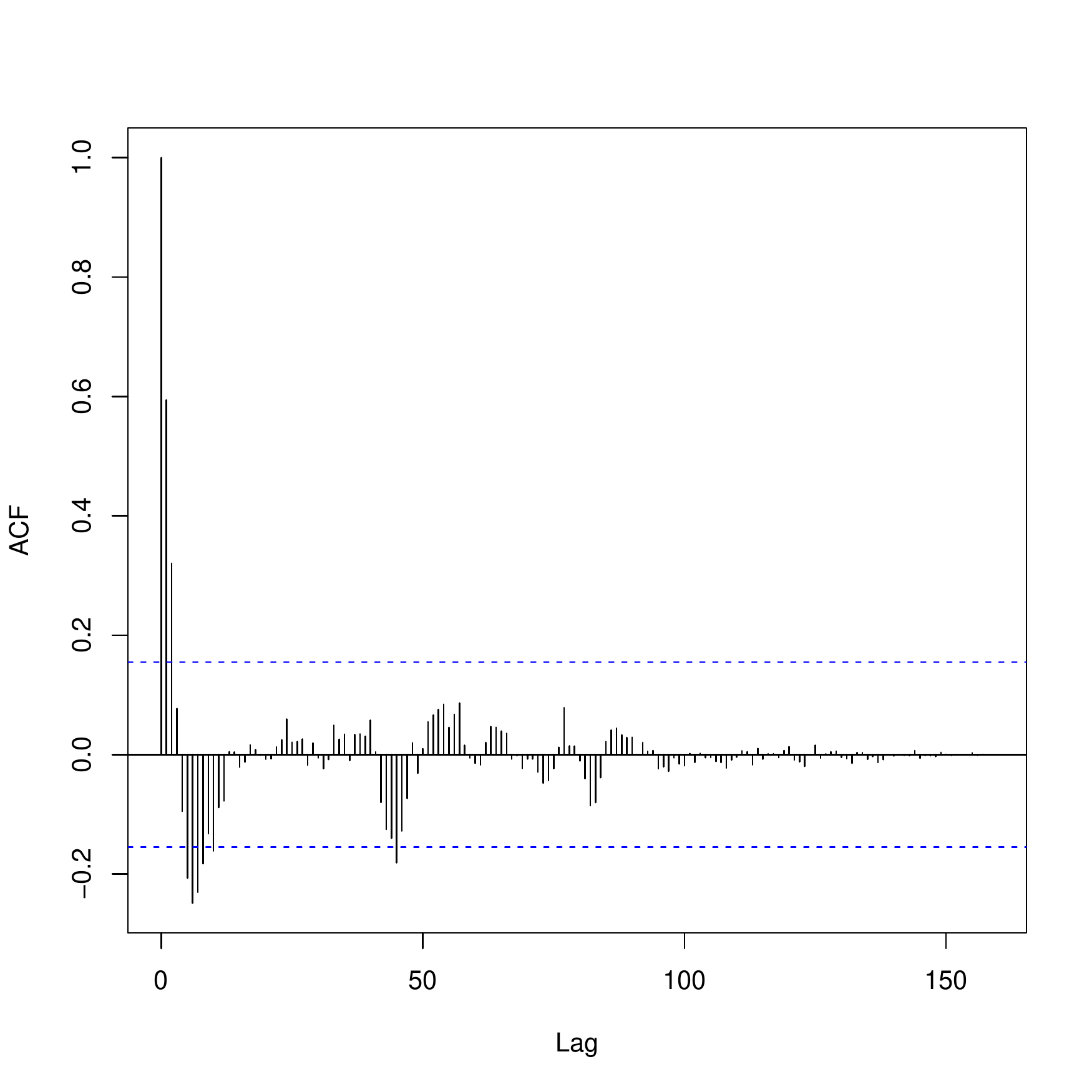}}
\end{subfloat}
\begin{subfloat}[PACF: First leg]
{\includegraphics[width=0.24\textwidth]{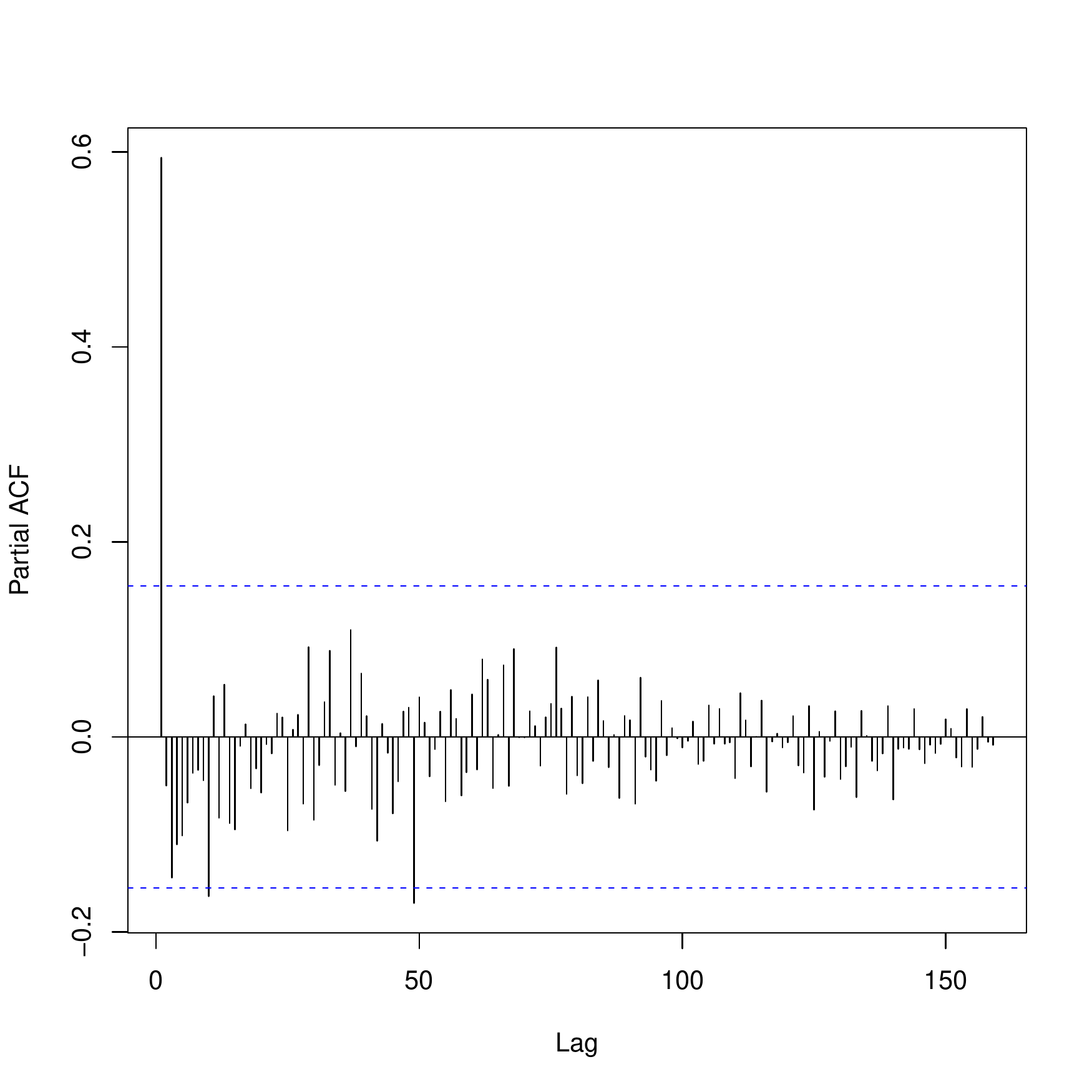}}
\end{subfloat}
\begin{subfloat}[ACF: \mbox{Across components}]
{\includegraphics[width=0.24\textwidth]{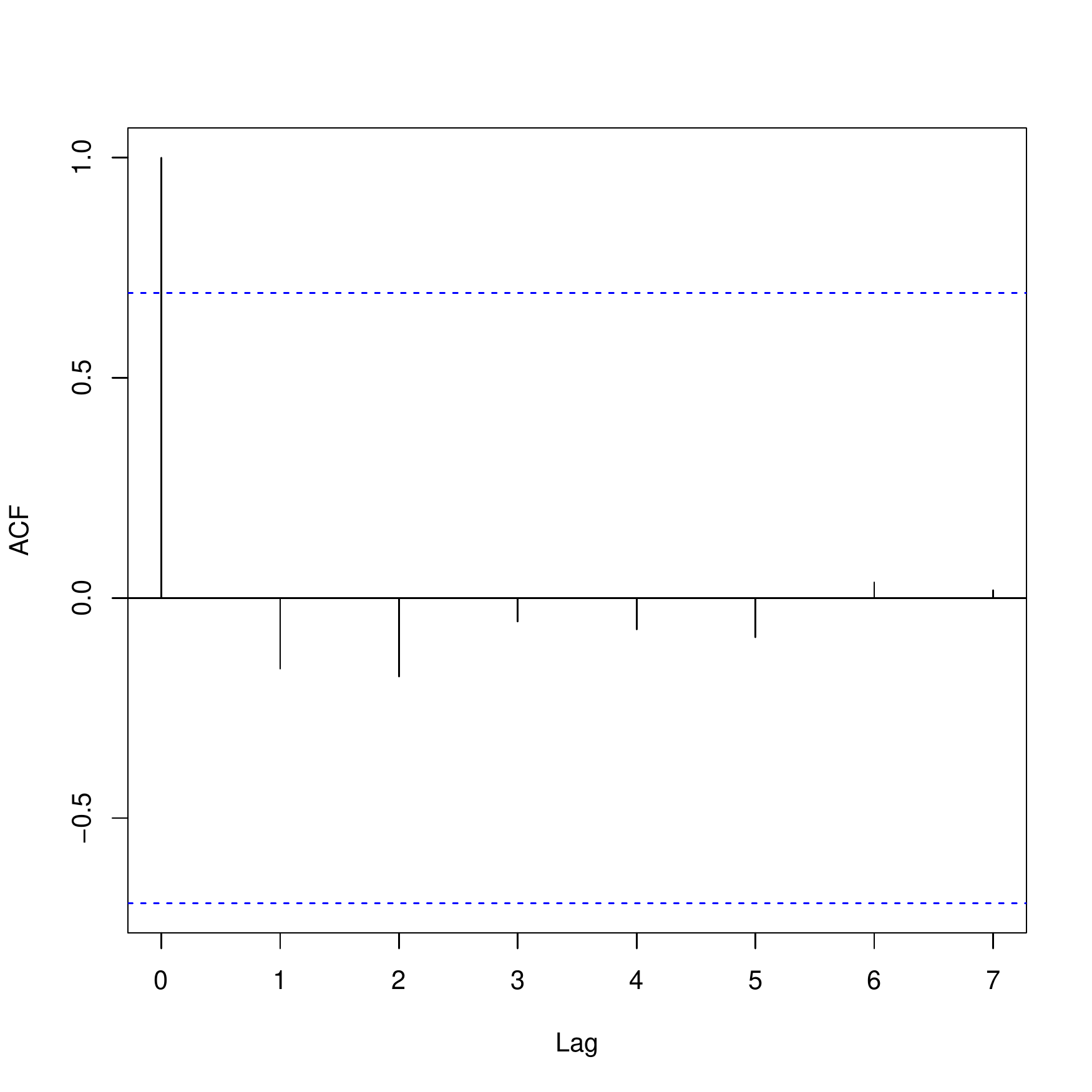}}
\end{subfloat}
\begin{subfloat}[PACF: \mbox{Across components}]
{\includegraphics[width=0.24\textwidth]{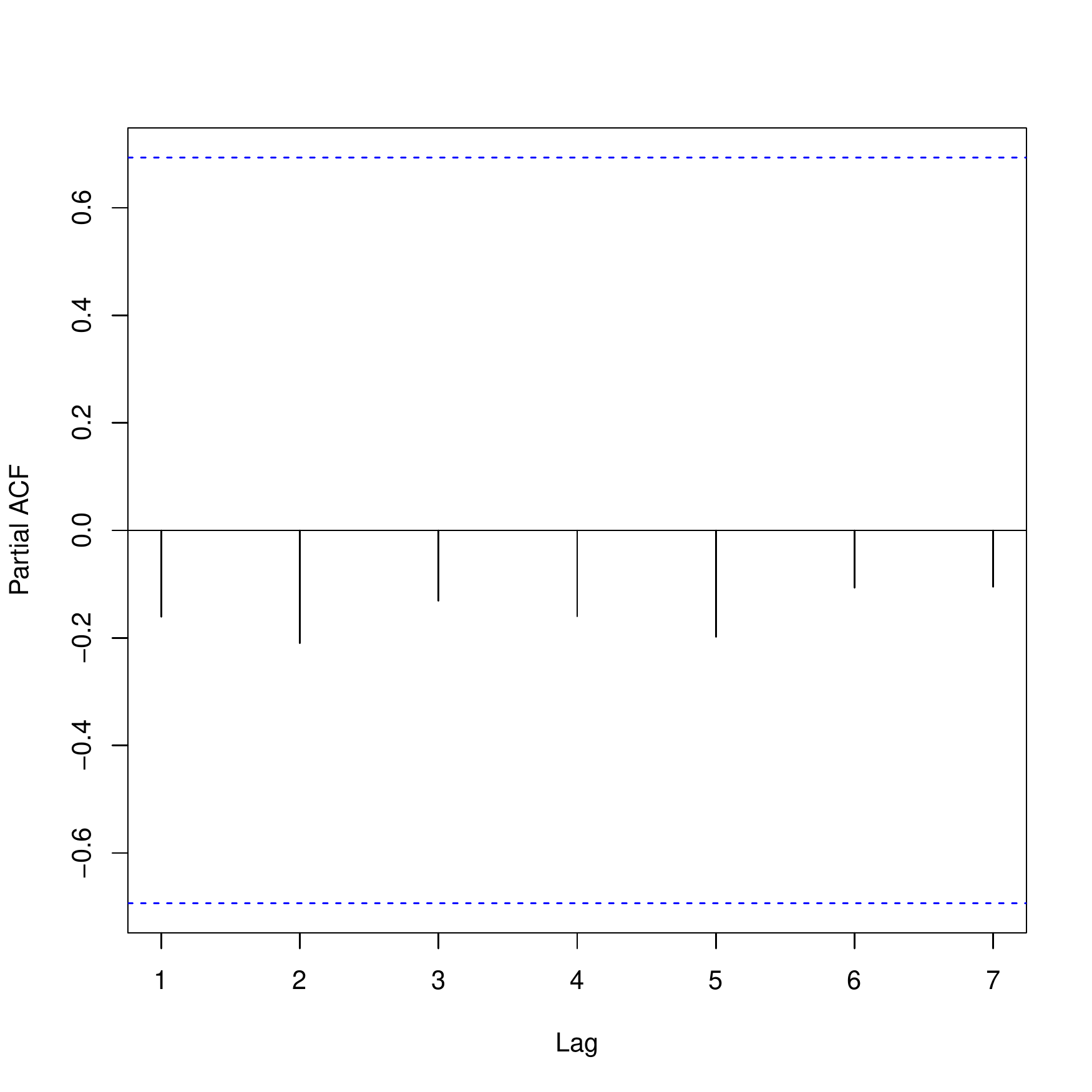}}
\end{subfloat}

\caption{ACF and PACF for the estimated errors  from the left trajectory}
\label{figure_ACF_1}
\end{figure}

We now return to the gas emission data example outlined in Section \ref{sec:intro}.
{\cred This data example has already been analyzed by \cite{Hirst2013} who adopt a Bayesian approach. In brief, they model atmospheric point concentration measurements as the sum of a spatially and temporally smooth atmospheric background, augmented by concentrations from local sources. 
Source emission rates are modelled by a Gaussian model taking possible multiple sources into account by means of a mixture model, whilst the atmospheric background concentration component is represented by a Markov random field. A reversible jump MCMC inference procedure is then used to provide point and uncertainty estimates for the plume origin. This approach also incorporates an optimisation approach to provide an initial point solution for inversion. These, and other necessary steps, combine to result in a computationally intensive procedure that relies on a multitude of parametric assumptions.  
In contrast, our approach makes fewer computational demands and requires only quite mild assumptions while still giving good results.  While we only analyse the case of a single source, our procedure can in principle be adapted to allow for multiple sources by appropriately defining change regions (which then no longer need to be epidemic). }

{\cred As pointed out in Section~\ref{sec:cpa:test} } a critical point for many change point tests and the corresponding estimators is the estimation of the long-run covariance matrix $\bs{\Sigma}$, which is a difficult problem statistically. There are two key aspects of the problem: first, time dependency and second, the large dimension of the covariance matrix with no structural assumption available. In the case of the gas emission data, time dependency is not negligible while the dependence between different components of the error process is very weak. By way of illustration, Figure  Figurs~\ref{figure_ACF_1} (a) and (b) show the empirical autocorrelation function (ACF) and partial autocorrelation function (PACF) for the first {\cred transect} of the estimated error sequence. These plots clearly indicate the presence of dependence. Equivalent analyses for the other components indicate similarly. See for example Chapter 18 in \cite{silke_diss}. In contrast, Figures~\ref{figure_ACF_1} (c) and (d) show the ACF and PACF for the estimated errors from one leg to the next for examplary time point~1, where no dependence is visible. Again, for other time points, a similar picture is obtained. To use ACF and PACF in the latter context makes sense keeping in mind that the original data was indeed a one-dimensional time series that has been transformed to a multivariate time series for the purpose of the data analysis. As such the vector of observations at each time point is indeed a thinned version of that original time series. This leads us to only estimate the long-run variances (i.e.\ the diagonal elements of $\bs{\Sigma}$) while setting the off-diagonal elements to zero. 


%
%

\begin{figure}[!b]
\begin{subfloat}[Estimated cloud based on the multivariate procedure]
	{\includegraphics[width=0.48\textwidth]{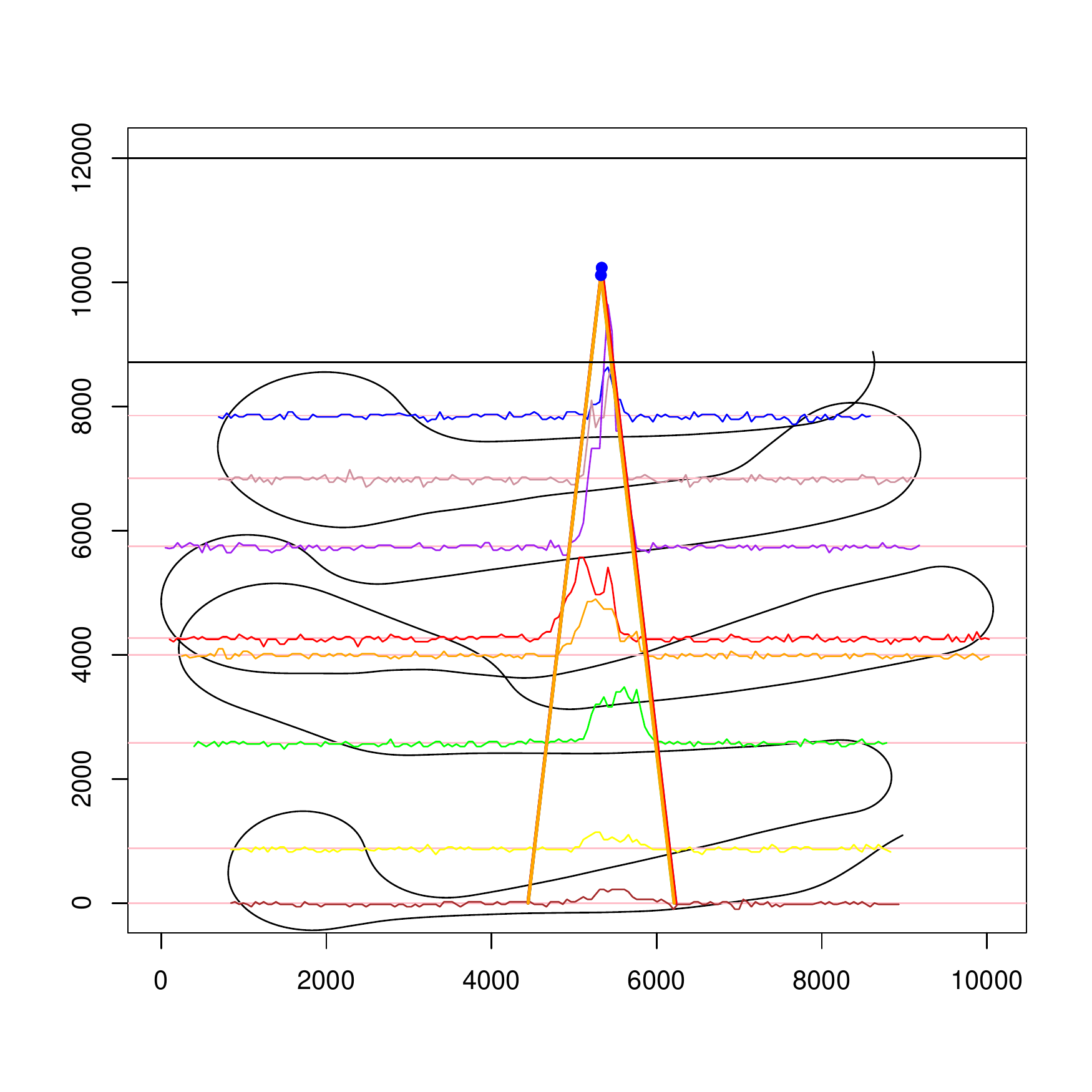}}
\end{subfloat}
\begin{subfloat}[Estimated cloud based on the projection procedure]
{\includegraphics[width=0.48\textwidth]{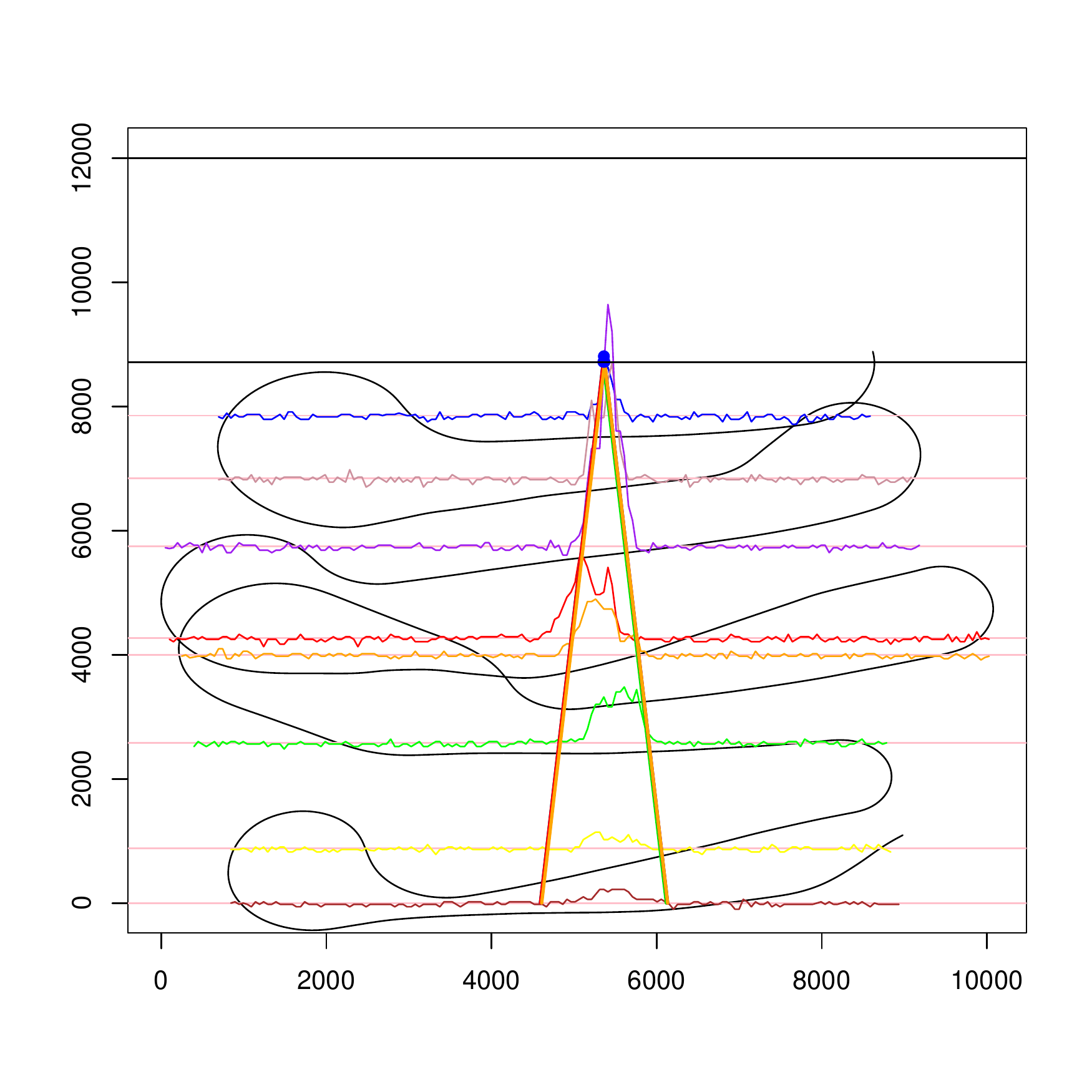}}
\end{subfloat}
\begin{subfloat}[Heatmap from the multivariate procedure]
{\includegraphics[width=0.98\textwidth]{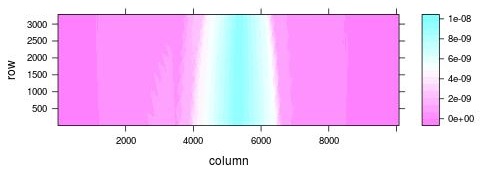}}
\end{subfloat}
\begin{subfloat}[Heatmap from the multivariate procedure]
{\includegraphics[width=0.98\textwidth]{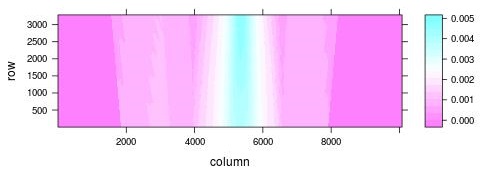}}
\end{subfloat}
\caption{Data Analysis for left trajectory}
\label{left_traj}
\end{figure}


In the following, we consider the left trajectory, while the analysis of the right trajectory is moved to the appendix (see Section~\ref{sec_right}). {\cred While the left trajectory can be considered well specified, the right trajectory is somewhat misspecified as the wind seems to have changed at some point. As such it gives some additional insight into the effect of misspecification.}
In both cases, we use a linear cloud as an approximation and do not have any knowledge about the actual opening angle. Thus we include a range of opening angles within the estimation procedure. While the simulations have shown that this can lead to some loss of precision, for the data analysis it does not seem to cause any problems. Figure~\ref{left_traj} (a) gives the corresponding cloud estimate for this data example and visual inspection suggests that a good fit has been obtained.  {\revtwo 
On the other hand, the figure also shows that the assumption of a constant mean within the change region is not met by the actual data at hand, where the concentration slowly increases before decreasing again. 
While the methodology of this paper could be adapted to this situation, this requires additional model assumptions on the shape of this gradual change. A substantial improvement of this approach can only be expected if such information is indeed available which is typically not the case, so that we decided once again to work with the simpler model.
}

Keeping in mind that the main objective is to get a good approximation of the source of the gas emission, it is also worthwhile considering a heatmap of the values of the statistic for each considered possible source location (where the maximum over all opening angles is given). This heat map for the left trajectory can be found in Figure~\ref{left_traj} (c). It becomes apparent that the statistic takes particularly high values in a vertical area in the middle, where differences are indeed very small. Effectively, all of those places can be considered possible source location so that this heat map can be used as a search map for the gas emission source. Furthermore, the reason why the values of the statistic are very close within that area  is the fact that a source closer to the lower end with a larger opening angle can approximate the signal as given by the discrete data set similarly well as a source closer to the upper end of the search area with a smaller opening angle. In a sense, this is related to an identifiability issue and what would be a flat likelihood surface in the context of maximum likelihood estimation.

For the projection estimator, we need to make additional assumptions on the decay of the gas intensity from one leg of the flight to the next. To this end, we use a function that first increase for the first legs before slowly decreasing to a similar level (for details please see  \cite[Figure 17.1]{silke_diss}). We first use an increasing level because the cloud is a 3D object, and the airplane flies at a certain height, so that the airplane first has to enter the cloud leading to first increasing levels before the dispersion effect of the gas leads to a slower decay again. This kind of behaviour can indeed be seen in both trajectories.
Figure ~\ref{left_traj} (b) shows the estimated cloud while Figure~\ref{left_traj} (d) shows the heat map. While the source of the cloud that has been picked by the projection estimator is different from the one picked by the multivariate estimator, the corresponding clouds do divide the time series in a very similar manner. This is a similar effect as has been described above in the context of the heat map related to weak identifiability. Considering the heat map of the projection statistic the area with high values of the statistics (that could be searched) is similar but smaller, which could indicate that the use of the additional information does indeed lead to a more precise estimation.

\section{Conclusions}\label{sec:conclusions}
The methodology developed in this work takes a  different view on multivariate change point analysis than the classical literature while  including those situations as a special case. In the setting we consider, the change points across components  no longer have to be aligned but can follow some kind of functional relationship. It is not necessary to know the functional relationship exactly but some reasonable parametrization needs to be available even if it depends on unknown parameter such as, for example, the precise shape of the cloud, its opening angle etc. 

The main contribution of the paper is the derivation of two different estimators for the unknown parameters of the functional relationship, at least some of which are the parameters of interest (such as the source of the cloud in the gas emission example). The first estimator only uses the parametric information of the functional relationship but allows for arbitrary change directions (as denoted by $\boldsymbol{\Delta}/\|\boldsymbol{\Delta}\|$ in this paper). As such it is related to classical estimators for multivariate change point situations with the difference that it is no longer the change points that are of interest but rather the underlying parameters of the functional parametrization of the changes. The second estimator relies on the additional knowledge of the change direction (not the magnitude of the change) and is related to classical change point estimators after an appropriate projection of the data into one dimension. This can greatly increase the precision of the estimators but at the risk of inconsistency or at less precision if that direction is not correct. Some simulations suggest that the procedure is not too sensitive with respect to mild deviations from the truth.

As a by product we obtain two testing procedures each related to one of the two estimators, for which we derive the limit distribution under the null as well as show consistency under alternatives. While these tests are not of immediate interest in the context of the gas emission example, they may be of independent interest in other situations.

For both estimators and both testing procedures, only very mild nonparametric assumptions on the error sequence are required and the case of dependent errors is also taken into account. We do not make any specific assumptions on this dependence but only need the validity of a functional central limit theorem which has been shown for many different dependent time series and weak dependency concepts.

The development of the methodology is motivated by an application of remote detection and location of the source of gas emissions based on aerial sensed-data and throughout the paper the development of the methodology has been explained by \mbox{means} of that data set, and of course finally analysed with the new methodology. While the methodology gives reasonable results, it can also be deduced that the exact source location is not strongly identifiable on the basis of this kind of data set.

Finally, all methods can be adapted to different but similar applications, situations or models, for example while an epidemic change setting is discussed in this paper, extensions to other scenarios are straightforward. Section~\ref{section_summary} explains the underlying ideas and construction principles to help with this task.

 \section*{Acknowledgements}
The authors are grateful to Bill Hirst and Philip Jonathan (Shell) for several valuable conversations that helped
motivate this work, and for providing access to the landfill data. {\cred The authors would also like to thank Philipp Klein (Otto-von-Guericke University, Magdeburg) for identifying a coding error in an earlier version  of this work.}

This work was supported by the grant 'Resampling procedures for high-dimensional change point tests of dependent data' financed by the state of Baden Wurttemberg. In
addition, support from the Karlsruhe House of Young Scientists (KHYS) for a research visit to Lancaster (UK),  the Isaac Newton Institute for Mathematical Sciences during the programme Statistical Scalability (supported by EPSRC grant numbers EP/K032208/1 and EP/R014604/1) and EPSRC (EP/N031938/1) are kindly acknowledged.

\bibliographystyle{plainnat}
\bibliography{IAErefs}

\newpage
\appendix
\part*{Appendix}
\section{Corresponding test procedures}\label{app_test}
{\cred In this section, we derive properties of the test statistics that are based on the data agglomeration techniques as discussed in Section~\ref{sec:cpa:test}.
First of all, } the multivariate statistic is defined as
 \begin{align*}
	 & 
	 T^M = \sup_{\vth\in\Theta} \frac{1}{N} A^M(\vth)= \sup_{\vth\in\Theta} \frac{1}{N} \bS_{\vth}^T \Sigma ^{-1} \bS_{\vth}.
 \end{align*}

 {\cred To derive the projection test statistic, first note that}
 the projected errors ${e}_P(t)$ are standardized if $\Sigma_A=\Sigma$ is correct. However, {\cred as already discussed this is usually too strong an assumption.}  Therefore, we stabilize the size of the test statistic with respect to possible misspecification (or misestimation) of $\Sigma$ by estimating the long-run variance of the projected~errors~by~$\widehat{\sigma}^2$. As a consequence, we obtain the projection statistic
 \begin{align*}
	 T^P &= \frac{1}{\sqrt{N}} \frac{1}{\widehat{\sigma}}  \sup_{\vth\in\Theta} A^P(\vth)=
	\frac{1}{\sqrt{N}} \frac{1}{\widehat{\sigma}}  \sup_{\vth\in\Theta}  \left| \sum\limits_{t=1}^{N} \left( \mathbf{D}_{\vth}(t/N) -\frac{1}{N} \sum\limits_{l=1}^{N} \mathbf{D}_{\vth}(l/N) \right) Y(t) \right|\\
	 &= \frac{1}{\sqrt{N}} \frac{1}{\widehat{\sigma}} \sup_{\vth\in\Theta}  \left| \sum\limits_{t=1}^{N}  \mathbf{D}_{\vth}(t/N) ( Y(t)-\bar{Y}_N) \right|,
\end{align*}
where $\mathbf{D}_{\vth}=\mathbf{D}_{\vth}^{\widetilde{\Delta},\widetilde{\Delta}}$.
 This statistic is related to the one by \citet{extreme_gradual} and \citet{limitdistr_gradual} that was obtained as the likelihood ratio statistic for a (non-epidemic) gradual change with a given polynomial slope.

 The following theorem establishes the null asymptotics of  these two {\cred test} statistics:
\begin{theorem}\label{theorem_null}
	Let $\{\boldsymbol{e}(\cdot)\}$ be a stationary time series that fulfils a functional central limit theorem towards a Wiener process with covariance matrix $\Sigma$.  Then, under the null hypothesis:
	\begin{enumerate}[(a)]
		\item For the multivariate statistic we obtain:
	\begin{align*}
		T^M\dto \sup_{\vth\in\Theta}\sum_{j=1}^d(B_j(G_{\vth}(j))-B_j(F_{\vth}(j))^2, 
	\end{align*}
where $\{B_j(\cdot)\}$, $j=1,\ldots,d$, are independent standard Brownian bridges. The assertion remain true, if $\Sigma$ is replaced by a consistent estimator $\widehat{\Sigma}$ (fulfilling $\widehat{\Sigma}\pto\Sigma$).
\item For the projection statistic we obtain (irrespective of the choice of $\widecheck{\boldsymbol{\Delta}}$ and $\Sigma$)	\begin{align*}
		&	T^P\dto\sup_{\vth\in\Theta} \left|\sum_{s\in \mathcal{M}_{\vth}} [\bD_{\vth}(s+)-\bD_{\vth}(s)] B(s)\right|,\\
		&\qquad \mathcal{M}_{\vth}=\{0<s<1: \bD_{\vth}(s+)-\bD_{\vth}(s) \neq 0\},
	\end{align*}
	where $\{B(\cdot)\}$ is a Brownian bridge,
	if $\widehat{\sigma}^2\pto\sigma_P^2$ with $\sigma_P^2=\sum_{h\in\ZZ}\cov(e^P(0),e^P(h))$. The assertion remains true if $\Sigma_A$ is replaced by $\widehat{\Sigma}_N$ with $\widehat{\Sigma}_N\pto \Sigma_A$. 	
	\end{enumerate}
\end{theorem}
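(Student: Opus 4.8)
The plan is to reduce both statistics to continuous functionals of the centred partial-sum processes of the errors, invoke the assumed functional central limit theorem (FCLT), and conclude with the continuous-mapping theorem; the only genuine work lies in controlling the behaviour uniformly over $\vth\in\Theta$. For part (a), under the null $\Delta_i=0$, so that $X_i(t)-\frac1N\sum_{l=1}^N X_i(l)=e_i(t)-\bar e_{i,N}$ and the $\mu_i$ cancel. I would first prove that the process $\vth\mapsto N^{-1/2}\bS_\vth$ converges weakly, its $i$-th coordinate having limit $W_i(G_\vth(i))-W_i(F_\vth(i))-(G_\vth(i)-F_\vth(i))W_i(1)=B_i(G_\vth(i))-B_i(F_\vth(i))$, where $(W_1,\ldots,W_d)$ is the Wiener process with covariance $\Sigma$ from the FCLT and $B_i(s)=W_i(s)-sW_i(1)$; the grand-mean term contributes exactly the bridge correction because $(\lfloor NG_\vth(i)\rfloor-\lfloor NF_\vth(i)\rfloor)/N\to G_\vth(i)-F_\vth(i)$. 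Applying the continuous-mapping theorem to $\boldsymbol g\mapsto\sup_{\vth}\boldsymbol g_\vth^T\Sigma^{-1}\boldsymbol g_\vth$ gives the limit $\sup_{\vth}\boldsymbol B_\vth^T\Sigma^{-1}\boldsymbol B_\vth$, and whitening by $\Sigma^{-1/2}$ turns this quadratic form into a sum over independent standard bridges. For the diagonal $\Sigma$ used in the application, under which the limiting bridges are genuinely independent, $\Sigma^{-1/2}=\diag(\sigma_1^{-1},\ldots,\sigma_d^{-1})$ is a coordinatewise rescaling and one obtains directly $\sum_{j=1}^d(B_j(G_\vth(j))-B_j(F_\vth(j)))^2$ with independent standard bridges $B_j$. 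Finally, replacing $\Sigma$ by $\widehat\Sigma\pto\Sigma$ is handled by Slutsky, using continuity of $A\mapsto A^{-1}$ at the positive-definite $\Sigma$ and the uniform bound $\sup_{\vth}|N^{-1}\bS_\vth^T(\widehat\Sigma^{-1}-\Sigma^{-1})\bS_\vth|\le\|\widehat\Sigma^{-1}-\Sigma^{-1}\|\,\sup_{\vth}N^{-1}\|\bS_\vth\|^2$, whose last supremum is tight.

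For part (b), under the null $Y(t)-\bar Y_N=e_P(t)-\bar e_{P,N}$, so $A^P(\vth)=|\sum_{t=1}^N\mathbf D_\vth(t/N)(e_P(t)-\bar e_{P,N})|$. Since $e_P=\boldsymbol e^T\Sigma_A^{-1}\widetilde{\boldsymbol\Delta}/\|\Sigma_A^{-1/2}\widetilde{\boldsymbol\Delta}\|$ is a fixed linear functional of $\boldsymbol e$, the scalar FCLT holds and the centred partial-sum process $E_N(s)=N^{-1/2}\sum_{t=1}^{\lfloor Ns\rfloor}(e_P(t)-\bar e_{P,N})$ converges weakly to $\sigma_P B$ for a standard Brownian bridge $B$. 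The key step is a single summation by parts, transferring the differencing onto $\mathbf D_\vth$: because $\mathbf D_\vth$ is left-continuous and piecewise constant, the resulting increments are supported near its jump set $\mathcal M_\vth$ and accumulate to the jump sizes $\mathbf D_\vth(s+)-\mathbf D_\vth(s)$, while the accompanying partial-sum factor converges to $\sigma_P B$ evaluated at the jump locations. Hence $N^{-1/2}A^P(\vth)\dto\sigma_P|\sum_{s\in\mathcal M_\vth}[\mathbf D_\vth(s+)-\mathbf D_\vth(s)]B(s)|$. Dividing by $\widehat\sigma\pto\sigma_P$ removes the long-run scale, so the limit is a functional of a single standard bridge and depends on $\widecheck{\boldsymbol\Delta}$ and $\Sigma$ (or $\widehat\Sigma_N\pto\Sigma_A$) only through the deterministic weights $\mathbf D_\vth(s+)-\mathbf D_\vth(s)$; taking $\sup_{\vth}$ through the continuous-mapping theorem then yields the stated limit together with its robustness.

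The main obstacle in both parts is to upgrade the finite-dimensional convergence, which is immediate from the FCLT, to weak convergence of the whole $\vth$-indexed process, uniformly over $\Theta$, so that the supremum functional is continuous and the continuous-mapping theorem applies. This requires tightness, i.e.\ stochastic equicontinuity in $\vth$, which I would obtain from the assumed continuity of $\vth\mapsto F_\vth(i)$ and $\vth\mapsto G_\vth(i)$ combined with the modulus of continuity of the limiting Wiener/bridge process: a small change in $\vth$ moves the window endpoints only slightly and hence perturbs all increments uniformly little. In part (b) a further delicate point is to show that the discretisation error between the sum $\sum_{t=1}^N\mathbf D_\vth(t/N)(e_P(t)-\bar e_{P,N})$ and the limiting jump functional of $E_N$ is asymptotically negligible uniformly in $\vth$, which again hinges on the continuity of the endpoints in $\vth$ and the left-continuous step structure of $\mathbf D_\vth$.
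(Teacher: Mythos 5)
Your proposal follows essentially the same route as the paper: part (a) is the FCLT plus continuous mapping applied to the $\vth$-indexed partial-sum process (the paper states this in one line and defers details to the thesis), and part (b) uses exactly the paper's summation-by-parts identity exploiting that $\bD_{\vth}$ is piecewise constant with jump set $\mathcal{M}_{\vth}$, followed by the FCLT for the projected errors and Slutsky for $\widehat{\sigma}$. Your additional attention to stochastic equicontinuity in $\vth$ and to the fact that the reduction of the quadratic form to a sum of independent standard bridges really uses the (diagonal) structure of $\Sigma$ is a correct and welcome filling-in of details the paper leaves implicit, not a different method.
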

In the second part of the theorem it is important to note that the gradual change in our example is in fact step-wise constant with discontinuity points in $\mathcal{M}$. If instead a slope is assumed that is differentiable, then one gets an integral of the Brownian bridge weighted by the derivative of the slope as a limit \citep{limitdistr_gradual}.

The assumption on the error sequence is very weak. For independent errors with second moments it follows from the famous Donsker theorem (Theorem 16.1 in \citet{billingsley}). Subsequently, it has been proven for many different types of weak dependence, including (but not limited to) mixing or $L^4$-approximation (see for example \citet{fclt_mixing}, \citet{fclt_weak_mixing} or \citet[Appendix A]{Aue2009}).
\begin{rem}[Misspecification of the covariance matrix]\label{rem_mis_cov}
	If $d$ is relatively large, then ${\Sigma}$ cannot be estimated well without making further assumptions (such as diagonality or sparsity). For this reason, it is also of interest to understand the behavior of the statistics under misspecification, i.e.\ if $\widehat{\Sigma}_N\to\Sigma_A$ for some positive definite matrix $\Sigma_A$. The projection statistic is robust in this respect under the null hypothesis, in the sense that the same limit applies, because we can easily estimate the (long-run-)variance of the projected errors consistently. This is not true for the multivariate statistic for which the Brownian bridges $\left\{B_1(\cdot),\ldots,B_d(\cdot)\right\}$ in the limit distribution are no longer independent but have the covariance matrix $\Sigma_A^{-1/2}\Sigma^{1/2}$. In this case bootstrap methods can help (see \citet{Aston2012}).
	\end{rem}

	\begin{rem}[Weighted versions of the statistics]\label{rem_weighted}
		{\cred For a classical change point tests, weighted versions of the statistics can help increase power of the test if the change occurs at certain time points for example as a way of incorporating a-priori information about the location of the change point (see e.g.~\cite{eeg_data}). Similarly, we can use weight functions here to increase power for certain source locations.}		\begin{enumerate}[(a)]
			\item \textbf{Multivariate Statistics}\\In our setup we can, e.g., use
		$$T^M(w)=\sup_{\vth\in\Theta}w^2_M(\vth) \bS_{\vth}^T\Sigma^{-1}\bS_{\vth}\dto \sup_{\vth} w_M(\vth)^2\sum_{j=1}^d(B_j(G_{\vth}(j))-B_j(F_{\vth}(j))^2 $$
		with a weight function $w_M$ that fulfills $\sup_{\vth\in\Theta}w^2_M(\vth)<\infty$. Alternatively, for a diagonal matrix $\Sigma$, we can penalize each component separately and consider 
		$$\widetilde{T}^M(w)=\sup_{\vth\in\Theta} \bS^{wT}_{\vth}\Sigma^{-1}\bS^w_{\vth}\dto\sup_{\vth} \sum_{j=1}^dw_M(\vth,j)^2(B_j(G_{\vth}(j))-B_j(F_{\vth}(j))^2 $$
		with $S^w_\vth(i)=w_M(\vth,i) S_{\vth}(i)$ as long as $\sup_{i=1,\ldots,d}\sup_{\vth\in\Theta}w_M^2(\vth,i)<\infty$. In the simulation study in \citet{silke_diss} the latter approach is implemented with $w_M(\vth,i)=(G_{\vth}(i)-F_{\vth}(i))^{-\beta}\,(1-G_{\vth}(i)+F_{\vth}(i))^{-\beta}$, $0\ls \beta\ls \frac 1 2$, which fulfills the above assumption if $\epsilon\ls G_{\vth}(i)-F_{\vth}(i)\ls 1-\epsilon$ for all $i=1\ldots,d$ and $\vth\in\Theta$ for some $\epsilon>0$. In  gas emission example this assumption translates to assumptions on the minimum and maximum opening angle as well as the minimal and maximal distance of a possible source from the first and last leg of the flight.
		This is a typical weight function in the classical model with $F_{\vth}(i)=\lambda_1$ and $G_{\vth}(i)=\lambda_2$.
\item \textbf{Projection Statistics}\\	Due to the gradual change a different type of weight function is necessary for the projection statistic, namely 
		\begin{align*}
			w_P(\vth)=\left(\frac 1 N\sum_{j=1}^N\left(\bD_{\vth}(j/N)-\frac{1}{N}\sum_{i=1}^N\bD_{\vth}(i/N)\right)^2\right)^{-\beta},\quad 0\ls \beta\ls \frac 1 2.
		\end{align*}
		Then it holds 
		\begin{align*}
			&\frac{1}{\sqrt{N}} \frac{1}{\widehat{\sigma}} \sup_{\vth\in\Theta}w_P(\vth)  \left| \sum\limits_{t=1}^{N}  \mathbf{D}_{\vth}(t/N) ( Y(t)-\bar{Y}_N) \right|
			\dto \sup_{\vth\in\Theta} \frac{\left|\sum_{s\in \mathcal{M}_{\vth}} [\bD_{\vth}(s+)-\bD_{\vth}(s)] B(s)\right|}{\left(\int_0^1(\bD_{\vth}(z)-\int_0^1 \bD_{\vth}(y)\,dy)^2\,dz\right)^{\beta}},
		\end{align*}
		if $\sup_{\vth\in\Theta}\left(\int_0^1(\bD_{\vth}(z)-\int_0^1 \bD_{\vth}(y)\,dy)^2\,dz\right)^{-\beta}<\infty$, which again translates to assumptions on the minimal and maximal opening angles as well as minimal and maximal distances to the first and last leg of the flight.

		For estimation purposes based on the projection statistic we need to use the above statistic with $\beta=\frac 1 2$ in order to obtain consistent results.
\end{enumerate}
	\end{rem}

Under the alternative the procedures have asymptotic power one as suggested by the next theorem:

\begin{theorem}\label{theorem_alt}Let the assumptions on the errors of Theorem~\ref{theorem_null} hold. Then, under a fixed alternative as in \eqref{eq_model}  with $\Delta_i\neq 0$ for at least one $i=1,\ldots,d$ it holds:
	\begin{enumerate}[(a)]
		\item For the multivariate statistic it holds $T^M\pto \infty$, i.e.\ it has asymptotic power one.  This remains true, if an estimator for $\Sigma$ is used as long as $\widehat{\Sigma}\pto \Sigma_A$ for some positive definite $\Sigma_A$ (which can be different from $\Sigma$).
		\item For the projection statistic with correct projection direction it holds  $T^P\pto \infty$, i.e.\ it has asymptotic power one, as long as $\widehat{\sigma}^2\pto\sigma^2_A\neq 0$ and  $\mathbf{D}_{\vth_0}$ is not constant.	\end{enumerate}
\end{theorem}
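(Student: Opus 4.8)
The plan is, for both statistics, to bound the supremum over $\Theta$ from below by the value of the statistic at the single point $\vth=\vth_0$ and to show that already at the true parameter the deterministic signal part dominates and forces divergence. Since $T^M\gs \frac1N A^M(\vth_0)$ and $T^P\gs \frac{1}{\sqrt N\,\widehat\sigma}|A^P(\vth_0)|$, it suffices to analyse these two quantities, and no uniformity over $\Theta$ is required.

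For part (a), I would decompose $\bS_{\vth_0}$ into a signal and a noise part by inserting model~\eqref{eq_model} into the definition of $S_{\vth_0}(i)$. Over the summation range $\lfloor NF_{\vth_0}(i)\rfloor+1\ls t\ls\lfloor NG_{\vth_0}(i)\rfloor$ the indicator equals one, so the mean part contributes $(\lfloor NG_{\vth_0}(i)\rfloor-\lfloor NF_{\vth_0}(i)\rfloor)(\mu_i+\Delta_i-\overline{X}_{i,N})$, while the error part is a centred partial sum. Using $\overline{X}_{i,N}=\mu_i+\Delta_i\gamma_i+o_P(1)$ with $\gamma_i:=G_{\vth_0}(i)-F_{\vth_0}(i)$, and the fact that the assumed functional central limit theorem makes all partial sums of the errors $O_P(\sqrt N)$, I obtain $\frac1N S_{\vth_0}(i)\pto \gamma_i(1-\gamma_i)\Delta_i=:c_i$. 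Provided the change region in the component with $\Delta_i\neq0$ is a proper subinterval (so $0<\gamma_i<1$), the limit vector $\bs c$ is non-zero. Then
\[
\tfrac1N A^M(\vth_0)=N\,\Big(\tfrac1N\bS_{\vth_0}\Big)^{T}\widehat\Sigma^{-1}\Big(\tfrac1N\bS_{\vth_0}\Big),
\]
where the quadratic factor converges in probability to $\bs c^{T}\Sigma_A^{-1}\bs c>0$ by positive definiteness of $\Sigma_A$ and continuity of matrix inversion (so that $\widehat\Sigma\pto\Sigma_A$ gives $\widehat\Sigma^{-1}\pto\Sigma_A^{-1}$), while the prefactor $N$ diverges. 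Hence $\frac1N A^M(\vth_0)\pto\infty$ and $T^M\pto\infty$; the case of the true $\Sigma$ is the special instance $\Sigma_A=\Sigma$.

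For part (b), evaluating at $\vth_0$ and exploiting that the projection direction is correct, $\widetilde{\boldsymbol\Delta}=\alpha\boldsymbol\Delta$ for some $\alpha\neq0$, gives $\widetilde{\boldsymbol\Delta}_{\{\cdot\}}=\alpha\,\boldsymbol\Delta_{\{\cdot\}}$ and hence the signal in $Y$ equals $\mathbf{D}^{\Delta,\widetilde\Delta}_{\vth_0}=\tfrac1\alpha\mathbf{D}_{\vth_0}$, regardless of the possibly misspecified $\Sigma_A$. Using the centring identity $\sum_t \mathbf{D}_{\vth_0}(t/N)(Y(t)-\bar Y_N)=\sum_t(\mathbf{D}_{\vth_0}(t/N)-\bar{\mathbf D})Y(t)$, with $\bar{\mathbf D}=\frac1N\sum_{l=1}^N\mathbf{D}_{\vth_0}(l/N)$, and inserting $Y(t)=\tfrac1\alpha\mathbf{D}_{\vth_0}(t/N)+e_P(t)$, the leading term is $\tfrac1\alpha\sum_t(\mathbf{D}_{\vth_0}(t/N)-\bar{\mathbf D})^2=\tfrac{N}{\alpha}V_0(1+o(1))$ with $V_0=\int_0^1(\mathbf{D}_{\vth_0}(z)-\int_0^1\mathbf{D}_{\vth_0})^2\,dz>0$, the positivity being exactly the hypothesis that $\mathbf{D}_{\vth_0}$ is not constant. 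Since $\mathbf{D}_{\vth_0}$ is a bounded step function, the cross term $\sum_t(\mathbf{D}_{\vth_0}(t/N)-\bar{\mathbf D})e_P(t)$ is, by Abel summation and the functional central limit theorem for $e_P$, only $O_P(\sqrt N)$. Thus $|A^P(\vth_0)|=\tfrac{N}{|\alpha|}V_0(1+o_P(1))$, and dividing by $\sqrt N\,\widehat\sigma$ with $\widehat\sigma\pto\sigma_A\neq0$ yields $T^P\gs \tfrac{\sqrt N}{|\alpha|\sigma_A}V_0(1+o_P(1))\pto\infty$. The version with $\widehat\Sigma_N$ in place of $\Sigma_A$ inside $\mathbf{D}$ follows by the same continuity argument used in Theorem~\ref{theorem_null}.

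The main obstacle is not the divergence rates themselves but the careful bookkeeping that the noise contributions are genuinely of lower order than the signal at $\vth_0$: in (a) one must confirm that $\frac1N\bS_{\vth_0}$ has the claimed non-zero deterministic limit, which hinges on the centring by $\overline{X}_{i,N}$ turning a genuine epidemic shift into a non-zero $c_i$ and on $0<\gamma_i<1$; in (b) one must justify the $O_P(\sqrt N)$ control of the weighted error sum tightly enough that it is dominated by the $O(N)$ signal. Both reductions use only the functional central limit assumption on $\{\boldsymbol e(\cdot)\}$ already imposed in Theorem~\ref{theorem_null}, together with the positive definiteness of $\Sigma_A$ in (a) and the conditions $\sigma_A\neq0$ and $\mathbf{D}_{\vth_0}$ non-constant in (b), which are precisely the assumptions listed.
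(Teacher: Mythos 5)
Your proposal is correct and follows essentially the same route as the paper: both lower-bound the supremum by the statistic at $\vth_0$, split into a deterministic signal of order $N$ (resp.\ $N/\sqrt N$ after normalization) plus noise that the functional central limit theorem controls at order $\sqrt N$, and identify the limiting signal $\gamma_i(1-\gamma_i)\Delta_i$ in (a) and the variance $\int_0^1(\mathbf{D}_{\vth_0}-\bar{\mathbf{D}})^2$ in (b). The only cosmetic differences are that the paper derives the signal limits uniformly in $\vth$ via Lemma~\ref{lemma_alt} (because the same lemma also serves the estimation result) and phrases the positivity in (b) through Jensen's inequality rather than your direct variance computation.
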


The assumption on the signal function $\mathbf{D}_{\vth}$ holds true for the linear cloud that has been used in the data analysis section.

\begin{rem}\label{rem_alt}
	The projection test remains consistent for a misspecified projection direction $\widecheck{\Delta}$ (where  $\widetilde{\Delta}$ or $\Sigma_A$ can be misspecified) as long as there exists $\vth_1$ with
\begin{align*}
	\int_0^1\mathbf{D}_{\vth_1}(s)\left( \mathbf{D}_{\vth_0}^{\Delta,\widetilde{\Delta}}(s)-\int_0^1\mathbf{D}_{\vth_0}^{\Delta,\widetilde{\Delta}}(z)\,dz \right)\,ds\neq 0.
\end{align*}
The same holds true if the shape of the cloud is misspecified if the above assertion holds with $\mathbf{D}_{\vth_0}^{\Delta,\widetilde{\Delta}}$ being the projected signal of the true shape function.

Similarly, the multivariate test statistic is consistent for a misspecified cloud  under weak conditions (see Remark 15.2 in \citet{silke_diss}).
\end{rem}

The above assertions also remain true for weighted versions of the corresponding test statistics (see Remarks 15.1 and 15.3 in \citet{silke_diss}).


\section{Proofs}\label{sec:proofs}
In this section, we give the proofs of the previous theorems. More detailed versions of the proofs can be found in \cite{silke_diss}.

\begin{proof}[of Theorem~\ref{theorem_null}]
	The assertion of (a) follows immediately from the functional central limit theorem on noting that the statistic is a continuous functional of the partial sum process. For details we refer to the proof of Theorem~14.1 in \cite{silke_diss}.

	With summation by parts 
	we derive the equality
\begin{align*}
	&\frac{1}{\sqrt{N}} \sum\limits_{i=1}^{N} \bD_{\vth}(i/N)\left(\frac{i}{N}\right) \left( Y(i) - \overline{Y}_N  \right)\\
&= \bD_{\vartheta}(1) \frac{1}{\sqrt{N}} \sum\limits_{i=1}^{N} \left( Y(i) - \overline{Y}_N  \right) \\
&\quad - \sum\limits_{i=1}^{N-1} \left( \bD_{\vartheta}\left( \frac{i+1}{N} \right) - \bD_{\vartheta}\left( \frac{i}{N} \right) \right) \frac{1}{\sqrt{N}} \sum\limits_{j=1}^{i} \left( Y(j) - \overline{Y}_N \right)\\
&= - \sum_{s \in \mathcal{M}_{\vartheta}} \left( \bD_{\vartheta}(s+) - \bD_{\vartheta}(s) \right) \frac{1}{\sqrt{N}} \sum\limits_{j=1}^{\lfloor Ns \rfloor} \left( Y(j) - \overline{Y}_N \right),
\end{align*}
 where the last line follows because $\bD_{\vth}(\cdot)$ is piecewise constant and has at most $2d$ points of discontinuity given in $\mathcal{M}_{\vth}$.
Under the null hypothesis it holds
\begin{align*}
&\frac{1}{\sqrt{N}} \sum\limits_{j=1}^{\lfloor Ns \rfloor} \left( Y(j) - \overline{Y}_N \right)
= \frac{1}{\sqrt{N}} \sum\limits_{j=1}^{\lfloor Ns \rfloor} \left( e_P(j) - \overline{e}_{P,N} \right)\\
&= \frac{1}{\sqrt{N}} \left( \sum\limits_{j=1}^{\lfloor Ns \rfloor} \left( \boldsymbol{e}(j) - \overline{\boldsymbol{e}}_N \right) \right)^T {\Sigma}_A^{-1} \widetilde{\Delta}/{\|\Sigma_A^{-1/2}\widetilde{\boldsymbol{\Delta}}\|}.
\end{align*}
We can now conclude the assertion from the functional central limit theorem of the error terms in addition to an application of the continuous mapping theorem  on noting that the variance of the above term is in fact $s\,\var(e_P(1))$.

Standard arguments yield the assertion in the case, where $\Sigma_A$ is consistently estimated.
\end{proof}

\begin{proof}[of the Remarks~\ref{rem_mis_cov} and \ref{rem_weighted}]
	The assertions of Remarks~\ref{rem_mis_cov} as well as ~\ref{rem_weighted}(a) can be obtained analogously to the above proof. For (b) we need to notice that due to the at most $2d$ discontinuity points in addition to $\sup_{\vth}\sup_s\bD_{\vth}(s)<\infty$, $w_P(\vth)$ converges uniformly to the $ \left(\int_0^1(\bD_{\vth}(z)-\int_0^1 \bD_{\vth}(y)\,dy)^2\,dz\right)^{-\beta}$. Then we can conclude as before.
\end{proof}

For the proof of the two results under the alternative, we need the following auxiliary lemma.
\pagebreak
\begin{lemma}\label{lemma_alt}Let the assumptions of Theorem~\ref{theorem_alt} hold. 
	\begin{enumerate}[(a)]
		\item The multivariate statistic yields the following signal:
\begin{align*}
	&\sup_{i=1, \cdots, d} \sup_{\vartheta \in \Theta} \left| \frac{1}{N} \sum_{t=\lfloor N{F}_{\vartheta}(i)\rfloor +1}^{\lfloor N {G}_{\vartheta}(i)\rfloor} \left( X_i(t) - \frac{1}{N} \sum_{l=1}^{N} X_i(l) \right) -\Delta_i\, h_{\vth}(i)\right|=o_P(1),
\end{align*}
where $h_{\vth}(i)=g_{F_{\vartheta_0}(i),G_{\vartheta_0}(i)}({G}_{\vartheta}(i)) - g_{F_{\vartheta_0}(i),G_{\vartheta_0}(i)}({F}_{\vartheta}(i))$ with 
\begin{align*}
g_{t_0,t_1}(s) = \begin{cases}
-s (t_1 - t_0), & s \leq t_0,\\
s (1 - (t_1 - t_0)) - t_0&  t_0 < s \leq t_1,\\
(1-s) (t_1 - t_0), & s > t_1.
\end{cases}
\end{align*}
In particular: $h_{\vth_0}(i)=(G_{\vth_0}(i)-F_{\vth_0}(i))(1-(G_{\vth_0}(i)-F_{\vth_0}(i)))>0.$
\item The projection statistic yields the following signal:
	\begin{align*}
		\sup_{\vth\in\Theta}  \left| \frac{1}{N}\sum\limits_{t=1}^{N} \mathbf{D}_{\vth}(t/N) \left(Y(t)-\bar{Y}_N\right)-\int_0^1\mathbf{D}_{\vth}(s)\left( \mathbf{D}_{\vth_0}^{\Delta,\widetilde{\Delta}}(s)-\int_0^1\mathbf{D}_{\vth_0}^{\Delta,\widetilde{\Delta}}(z)\,dz \right)\,ds \right|=o_P(1).
	\end{align*}
	The assertion remains true if $\Sigma_A$ is consistently estimated.
\end{enumerate}
	\end{lemma}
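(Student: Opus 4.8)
The plan is to prove both parts of Lemma~\ref{lemma_alt} by computing the deterministic limit of the centered partial sums under the alternative~\eqref{eq_model} and showing that the stochastic (error) contributions are uniformly negligible. For part (a), I would first substitute the model $X_i(t)=\mu_i+\Delta_i\mathds{1}_{\{F_{\vth_0}(i)<t/N\ls G_{\vth_0}(i)\}}+e_i(t)$ into the expression $\frac 1 N\sum_{t=\lfloor NF_{\vth}(i)\rfloor+1}^{\lfloor NG_{\vth}(i)\rfloor}(X_i(t)-\frac 1 N\sum_l X_i(l))$. The constant $\mu_i$ cancels exactly in the centering, so the expression splits into a deterministic signal term coming from the $\Delta_i$ indicator and a stochastic term coming from the centered errors. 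The deterministic term is $\frac{\Delta_i}{N}\big(\#\{t:\lfloor NF_{\vth}(i)\rfloor<t\ls\lfloor NG_{\vth}(i)\rfloor,\ F_{\vth_0}(i)<t/N\ls G_{\vth_0}(i)\}-(G_{\vth}(i)-F_{\vth}(i))\cdot(\text{length of the }\vth_0\text{ interval})\big)+o(1)$, which I would identify, after taking Riemann-sum limits, with $\Delta_i\,[g_{F_{\vth_0}(i),G_{\vth_0}(i)}(G_{\vth}(i))-g_{F_{\vth_0}(i),G_{\vth_0}(i)}(F_{\vth}(i))]=\Delta_i h_{\vth}(i)$. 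The piecewise-linear function $g_{t_0,t_1}$ is precisely the overlap-minus-proportion bookkeeping of how much of the $\vth_0$-plateau lies inside $(F_{\vth}(i),G_{\vth}(i)]$, so verifying the three cases of $g$ is a direct (if tedious) length computation.

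The remaining ingredient for (a) is to show the stochastic part $\frac 1 N\sum_{t=\lfloor NF_{\vth}(i)\rfloor+1}^{\lfloor NG_{\vth}(i)\rfloor}(e_i(t)-\overline{e}_{i,N})$ is $o_P(1)$ uniformly in both $i$ and $\vth$. Here I would invoke the functional central limit theorem assumed in Theorem~\ref{theorem_null}: the partial-sum process $\frac{1}{\sqrt N}\sum_{t\ls\lfloor Ns\rfloor}(e_i(t)-\overline{e}_{i,N})$ converges weakly to a Brownian bridge and is therefore $O_P(1)$ in the sup norm; dividing by the extra factor $\sqrt N$ (we normalize by $1/N$, not $1/\sqrt N$) kills it. Uniformity over $\vth$ follows because $\vth\mapsto F_{\vth}(i),G_{\vth}(i)$ are continuous on the compact $\Theta$ and the bridge has (tight) continuous sample paths, so the difference of the process evaluated at the two endpoints is uniformly controlled. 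Taking the sup over the finitely many components $i=1,\ldots,d$ preserves the $o_P(1)$ rate.

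For part (b) I would proceed in the same spirit using the projected series $Y(t)=\mathbf{D}^{\Delta,\widetilde\Delta}_{\vth_0}(t/N)+e_P(t)$. Again the signal-times-signal part converges, by a Riemann-sum argument, to $\int_0^1\mathbf{D}_{\vth}(s)(\mathbf{D}^{\Delta,\widetilde\Delta}_{\vth_0}(s)-\int_0^1\mathbf{D}^{\Delta,\widetilde\Delta}_{\vth_0}(z)\,dz)\,ds$, where the inner centering absorbs the $\bar Y_N$ correction applied to the deterministic part. The cross term $\frac 1 N\sum_t\mathbf{D}_{\vth}(t/N)(e_P(t)-\overline{e}_{P,N})$ is handled exactly by the summation-by-parts identity already derived in the proof of Theorem~\ref{theorem_null}: because $\mathbf{D}_{\vth}$ is piecewise constant with at most $2d$ jumps collected in $\mathcal{M}_{\vth}$, this term reduces to a finite sum $\sum_{s\in\mathcal M_{\vth}}[\mathbf{D}_{\vth}(s+)-\mathbf{D}_{\vth}(s)]\cdot\frac 1 N\sum_{j\ls\lfloor Ns\rfloor}(e_P(j)-\overline{e}_{P,N})$, each factor of which is $o_P(1)$ by the FCLT (after the $1/\sqrt N$ argument above), uniformly in $\vth$ since the jump heights and the sup-norm of the bridge are uniformly bounded over the compact parameter set. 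The statement under consistently estimated $\Sigma_A$ follows since $\mathbf{D}_{\vth}$ and $e_P$ depend continuously on $\Sigma_A^{-1}$, so $\widehat\Sigma_N\pto\Sigma_A$ propagates through by the continuous mapping theorem.

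I expect the main obstacle to be establishing the \emph{uniform} (over $\vth\in\Theta$) negligibility of the stochastic terms rather than the pointwise statements, since uniformity requires combining tightness of the bridge process with equicontinuity of $\vth\mapsto(F_{\vth},G_{\vth})$ (respectively the jump structure of $\mathbf{D}_{\vth}$); the deterministic Riemann-sum limits and the bookkeeping behind $g_{t_0,t_1}$ are routine by comparison.
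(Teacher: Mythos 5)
Your proposal is correct and follows essentially the same route as the paper: substitute the model, identify the deterministic part with $\Delta_i h_{\vth}(i)$ (resp.\ the Riemann integral in (b)), and kill the centered error terms uniformly via the functional central limit theorem combined with the extra $1/\sqrt{N}$ in the normalization, reusing the null-hypothesis bound (summation by parts) for the cross term in (b). The paper compresses all of this into ``elementary calculations'' and a reference to Theorem~\ref{theorem_null}, so your write-up simply supplies the same steps in more detail.
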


	\begin{proof}
Some elementary calculations yield%
\begin{align*}
	& \frac{1}{N} \sum_{t=\lfloor N{F}_{\vartheta}(i)\rfloor +1}^{\lfloor N {G}_{\vartheta}(i)\rfloor} \left( X_i(t) - \frac{1}{N} \sum_{l=1}^{N} X_i(l) \right) -\Delta_i\, h_{\vth}(i)
\\
&= \frac{1}{N} \sum_{t=\lfloor N{F}_{\vartheta}(i)\rfloor +1}^{\lfloor N {G}_{\vartheta}(i)\rfloor} \left( e_i(t) - \frac{1}{N} \sum_{l=1}^{N} e_i(l) \right) +o(1)
\end{align*}
uniformly in $i$ and $\vth$. Assertion (a) now follows from an application of the (multivariate) functional central limit theorem for the error terms $\{\mathbf{e}(\cdot)\}$.

For (b) first note
\begin{align*}
	&\sup_{\vth\in\Theta}  \left| \frac{1}{N}\sum\limits_{t=1}^{N} \mathbf{D}_{\vth}\left(\frac t N\right) \left(Y(t)-\bar{Y}_N\right)-\int_0^1\mathbf{D}_{\vth}(s)\left( \mathbf{D}_{\vth_0}^{\Delta,\widetilde{\Delta}}(s)-\int_0^1\mathbf{D}_{\vth_0}^{\Delta,\widetilde{\Delta}}(z)\,dz \right)\,ds \right|\\
	&\leq \sup_{\vartheta \in \Theta} \left| \frac{1}{N} \sum_{t=1}^{N} \mathbf{D}_{\vartheta}\left(\frac tN\right) \left( \mathbf{D}_{{\vartheta_0}}^{\Delta,\widetilde{\Delta}}\left(\frac{t}{N}\right) - \overline{\mathbf{D}}^{\Delta,\widetilde{\Delta}}_{{\vartheta_0}} \right) - \int_0^1\mathbf{D}_{\vth}(s)\left( \mathbf{D}_{\vth_0}^{\Delta,\widetilde{\Delta}}(s)-\int_0^1\mathbf{D}_{\vth_0}^{\Delta,\widetilde{\Delta}}(z)\,dz \right)\,ds  \right|\\
&\quad + \sup_{\vartheta \in \Theta} \left| \frac{1}{N} \sum_{t=1}^{N} D_{{\vartheta}}\left(\frac tN\right) \left( e_P(t) - \overline{e}_P \right) \right|=o_P(1),
\end{align*} 
where the assertion for the first summand follows from standard arguments, while the assertion for the second summand follows from Theorem~\ref{theorem_null}. If a consistent estimator $\widehat{\Sigma}_N$ for $\Sigma_A$ is used, the assertion for the first summand follows similarly, while the assertion for the second summand follows from Remark~\ref{rem_mis_cov}.

		\end{proof}

		\begin{proof}[of Theorem~\ref{theorem_alt} and Remark~\ref{rem_alt}]
		By Lemma~\ref{lemma_alt}(a) it holds
		\begin{align*}
			&T^M \gs  \frac{1}{N}  \bS_{\vth_0}^T \widehat{\Sigma}^{-1} \bS_{\vth_0} =
			N\, \left(\boldsymbol{H}_{\vth_0}^T\Sigma_A^{-1}\boldsymbol{H}_{\vth_0}+o_P(1)  \right)\pto \infty,\\
			&\text{where}\quad \boldsymbol{H}_{\vth_0}^T=(H_{\vth_0}(1)^,\ldots,H_{\vth_0}(d))\neq \mathbf{0}\text{ as } H_{\vth_0}(i)=\Delta_i\,h_{\vth_0}(i).
		\end{align*}
			This completes the proof of (a).

			For (b) we obtain similarly by Lemma~\ref{lemma_alt}(b)	
			\begin{align*}
				&T^P\gs \sqrt{N}\left( \left| \int_0^1\mathbf{D}_{\vth_1}(s)\left( \mathbf{D}_{\vth_0}^{\Delta,\widetilde{\Delta}}(s)-\int_0^1\mathbf{D}_{\vth_0}^{\Delta,\widetilde{\Delta}}(z)\,dz \right)\,ds\right|+o_P(1) \right).
			\end{align*}
			This completes the proof of Remark~\ref{rem_alt}. If $\widetilde{\Delta}=\Delta/c$ for some constant $c>0$, then by an application of Jenssens inequality and the fact that equality only holds in Jenssens equality for constant functions, we have indeed (with $\vth_1=\vth_0$)
			\begin{align*}
				\int_0^1\mathbf{D}_{\vth_0}(s)\left( \mathbf{D}_{\vth_0}^{c\widetilde{\Delta},\widetilde{\Delta}}(s)-\int_0^1\mathbf{D}_{\vth_0}^{c\widetilde{\Delta},\widetilde{\Delta}}(z)\,dz \right)\,ds\neq 0.
			\end{align*}
		\end{proof}

	\begin{proof}[of Theorem~\ref{theorem_est} and Remark~\ref{rem_est}]
		With $\mathbf{H}_{\vth}=(H_{\vth}(1),\ldots,H_{\vth}(d))^T$ with $H_{\vth}(i)=\Delta_i \,h_{\vth,\vth_0}(i)$  as in Lemma~\ref{lemma_alt} it holds 
\begin{align*}
&\sup_{\vartheta \in \Theta} \left| \left( \frac {1}{N^2} \bs{S}_{\vartheta}^T \widehat{\bs{\Sigma}}_N^{-1} \bs{S}_{\vartheta} \right)^{\frac 12} - \left( \bs{H}_{\vartheta}^T \bs{\Sigma}_A^{-1}\bs{H}_{\vartheta} \right)^{\frac 12} \right|\\
&\leq \sup_{\vartheta \in \Theta}\left( \left\|  \widehat{\bs{\Sigma}}_N^{-\frac{1}{2}} \left( \frac {1}{N} \bs{S}_{\vartheta}
- \bs{H}_{\vartheta} \right) \right\|
+ \left\| \left( \widehat{\bs{\Sigma}}_N^{-\frac{1}{2}}
 - \bs{\Sigma}_A^{-\frac{1}{2}} \right) \bs{H}_{\vartheta} \right\|\right)=o_P(1).
\end{align*}

Because $\vth\mapsto \|\Sigma_A^{-1/2}\,\mathbf{H}_{\vth}\|$ is continuous, standard arguments show that the estimator from the multivariate procedure in (a) converges to the maximizer of the signal part $\|\Sigma_A^{-1/2}\,\mathbf{H}_{\vth}\|$ if this maximizer is unique. Consequently, the assertion of Remark~\ref{rem_est} follows. We will now show that for a diagonal matrix $\Sigma_A$ the maximizer is uniquely given by $\vth_0$. Because the function $g_{t_0,t_1}(s), \ s \in [0,1], 0 \leq t_0<t_1 \leq 1$, defined in Lemma \ref{lemma_alt}, is piecewise constant with a unique minimum at $t_0$ and  a  unique maximum at $t_1$, the difference $g_{t_0,t_1}(s_1)-g_{t_0,t_1}(s_0)$ has a unique maximum in $(s_0,s_1)=(t_0,t_1)$. Consequently, $h_{\vth}(i)$ has a maximum in $\vth_0$ for each $i$, in  particular,
$$| H_{\vartheta}(i) | \leq | H_{\vartheta_0}(i) |, \quad \forall \ \vartheta \in \Theta, \ i=1, \dots d.$$
Furthermore, due to the identifiability uniqueness condition there exists an $i=1,\ldots,d$ for each $\vth\neq\vth_0$ such that
\begin{equation*}
\ | H_{\vartheta}(i) | < | H_{\vartheta_0}(i) |.
\end{equation*}
Since $\boldsymbol{\Sigma}_A=\mbox{diag}(s_1^2, \dots, s_d^2), \ s_i>0$ is a diagonal matrix, we finally get for all $\vth\neq\vth_0$
\begin{align*}
& \left\| \boldsymbol{\Sigma}_A^{-\frac{1}{2}} \boldsymbol{H}_{\vartheta} \right\|^2
= \sum_{i=1}^d \frac 1{s_i^2} H_{\vartheta}^2(i)
< \sum_{i=1}^d \frac 1{s_i^2} H_{\vartheta_0}^2(i)
= \left\| \boldsymbol{\Sigma}_A^{-\frac{1}{2}} \boldsymbol{H}_{\vartheta_0} \right\|^2,
\end{align*}
completing the proof of (a).

For the proof of (b) we first obtain by Lemma~\ref{lemma_alt} (b) and some elementary arguments
\begin{align*}
	&	\frac{\frac 1 N \sum\limits_{t=1}^{N}  \mathbf{D}_{\vth}(t/N) ( Y(t)-\bar{Y}_N) }{\left(\frac 1 N\sum_{t=1}^N\left(\mathbf{D}_{\vth}(t/N) -\frac 1 N\sum_{l=1}^N\mathbf{D}_{\vth}(l/N)\right)^2  \right)^{1/2}}\\
	&=\frac{\int_0^1\mathbf{D}_{\vth}(s)\left( \mathbf{D}_{\vth_0}^{\Delta,\widetilde{\Delta}}(s)-\int_0^1\mathbf{D}_{\vth_0}^{\Delta,\widetilde{\Delta}}(z)\,dz \right)\,ds}{\left( \int_0^1\left( \mathbf{D}_{\vth}(s)-\int_0^1\mathbf{D}_{\vth}(z)\right)^2\,dz \right)^{1/2}}+o_P(1)\\
	&=\int_0^1\widetilde{\mathbf{D}}_{\vth}(s)\,\left( \mathbf{D}_{\vth_0}^{\Delta,\widetilde{\Delta}}(s)-\int_0^1\mathbf{D}_{\vth_0}^{\Delta,\widetilde{\Delta}}(z)\,dz \right)\,ds+o_P(1),
\end{align*}
with the notation of Remark~\ref{rem_alt} (b). Obviously, the signal part is maximized iff 
$	\int_0^1\widetilde{\mathbf{D}}_{\vth}(s)\,\widetilde{\mathbf{D}}_{\vth_0}^{\Delta,\widetilde{\Delta}}(s)\,ds$
is maximized. Standard arguments give the assertion of the Remark. For assertion (b) of Theorem~\ref{theorem_est} we need to show that $\vth_0$ is the unique maximizer of this expression under the given assumptions. Indeed by the Cauchy-Schwarz inequality it holds
\begin{align*}
	\int_0^1\widetilde{\mathbf{D}}_{\vth}(s)\,\widetilde{\mathbf{D}}_{\vth_0}^{\Delta,\widetilde{\Delta}}(s)\,ds\ls 1,
\end{align*}
where equality holds iff $\widetilde{\mathbf{D}}_{\vth}=c\,\widetilde{\mathbf{D}}_{\vth_0}^{\Delta,\widetilde{\Delta}}$ for some constant $c$. Because of the given identifiability uniqueness condition, this only holds for $\vth=\vth_0$, completing the proof.
	\end{proof}

	\section{Data analysis of the right trajectory}\label{sec_right}
	
\begin{figure}[tb]
\begin{subfloat}[Estimated cloud based on the multivariate procedure]
	{\includegraphics[width=0.48\textwidth]{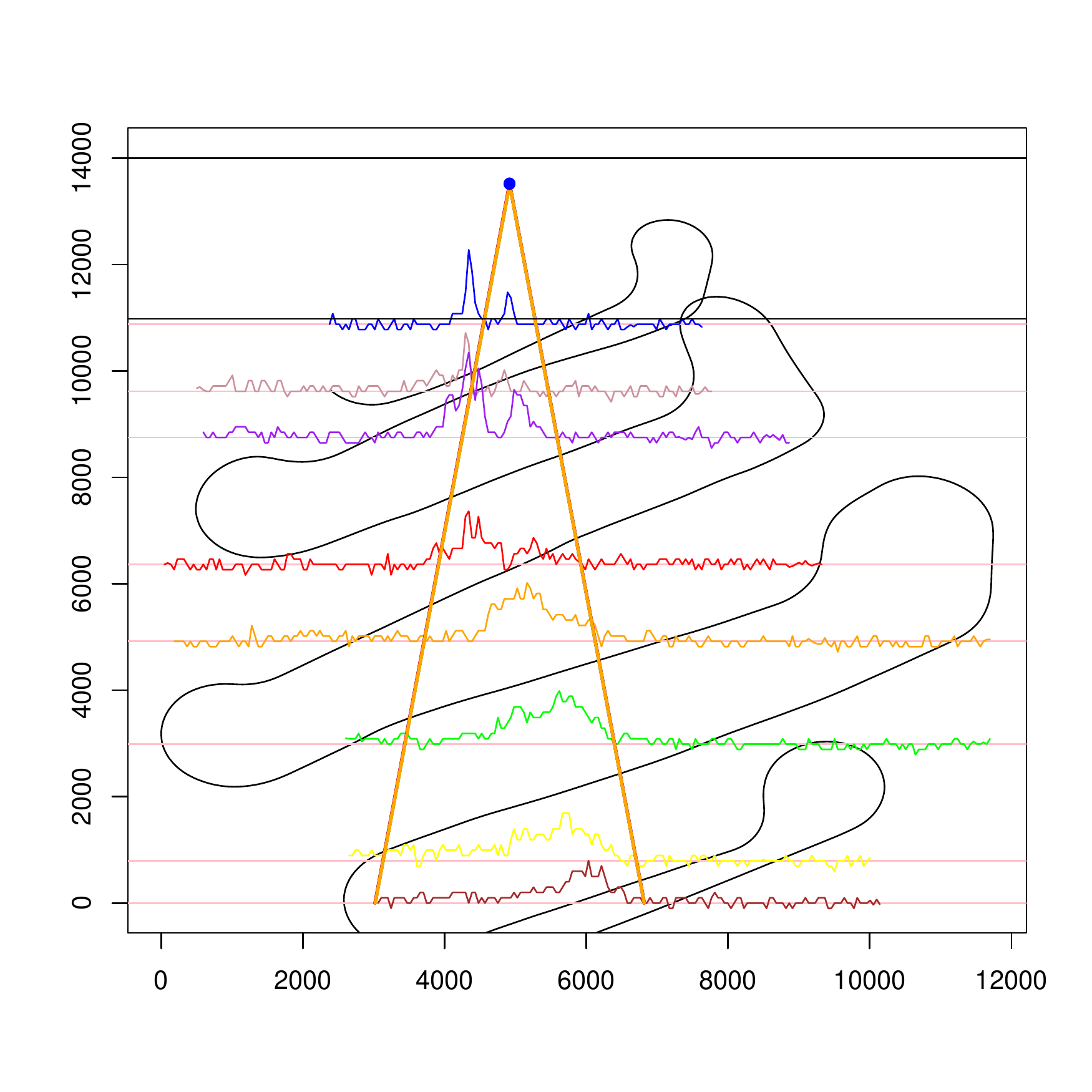}}
\end{subfloat}
\begin{subfloat}[Estimated cloud based on the projection procedure]
{\includegraphics[width=0.48\textwidth]{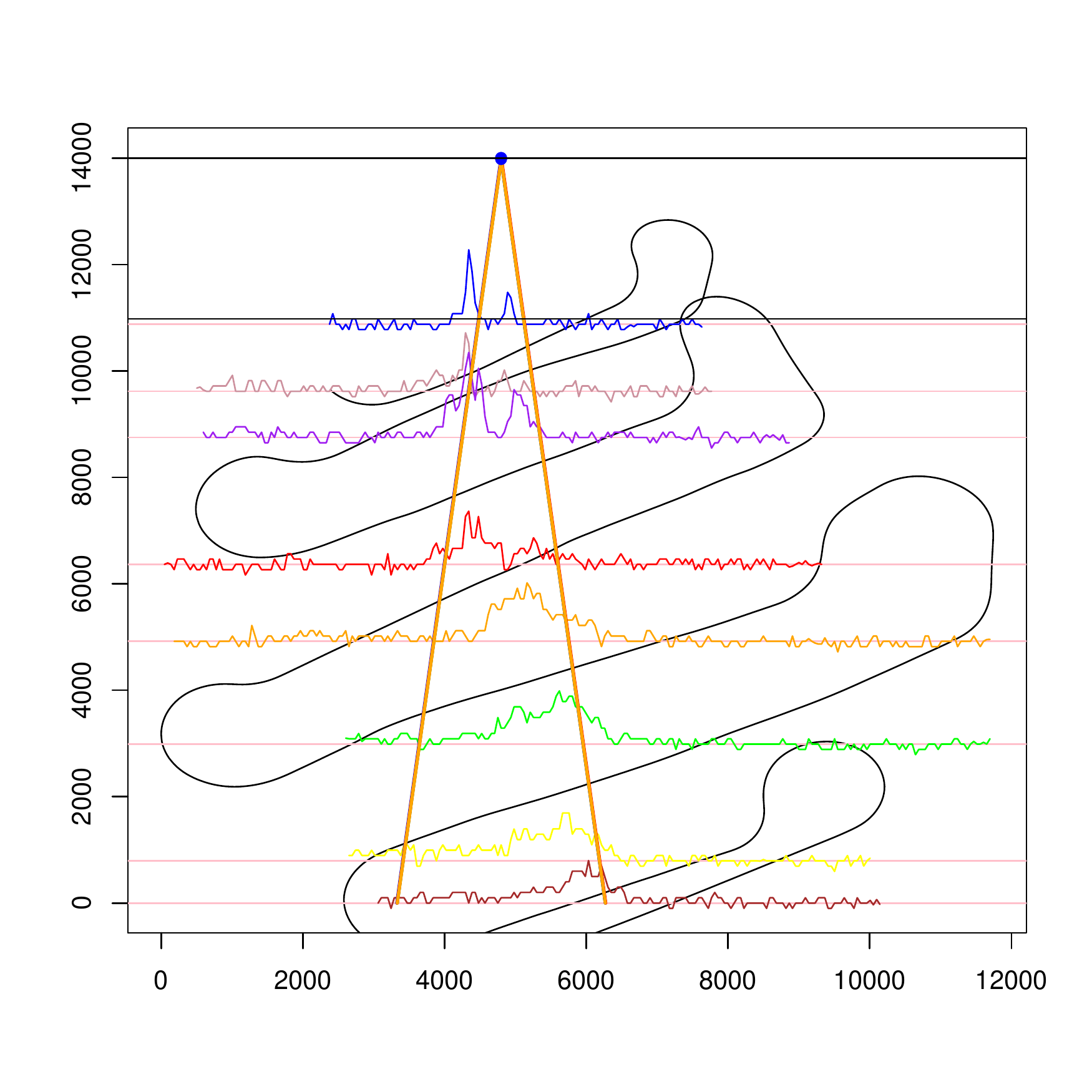}}
\end{subfloat}
\caption{Estimated {\cred clouds with unknown} opening angles for the right trajectory}
\label{right_traj_1}
\end{figure}

\begin{figure}[tb]
\begin{subfloat}[Estimated cloud based on the multivariate procedure]
	{\includegraphics[width=0.48\textwidth]{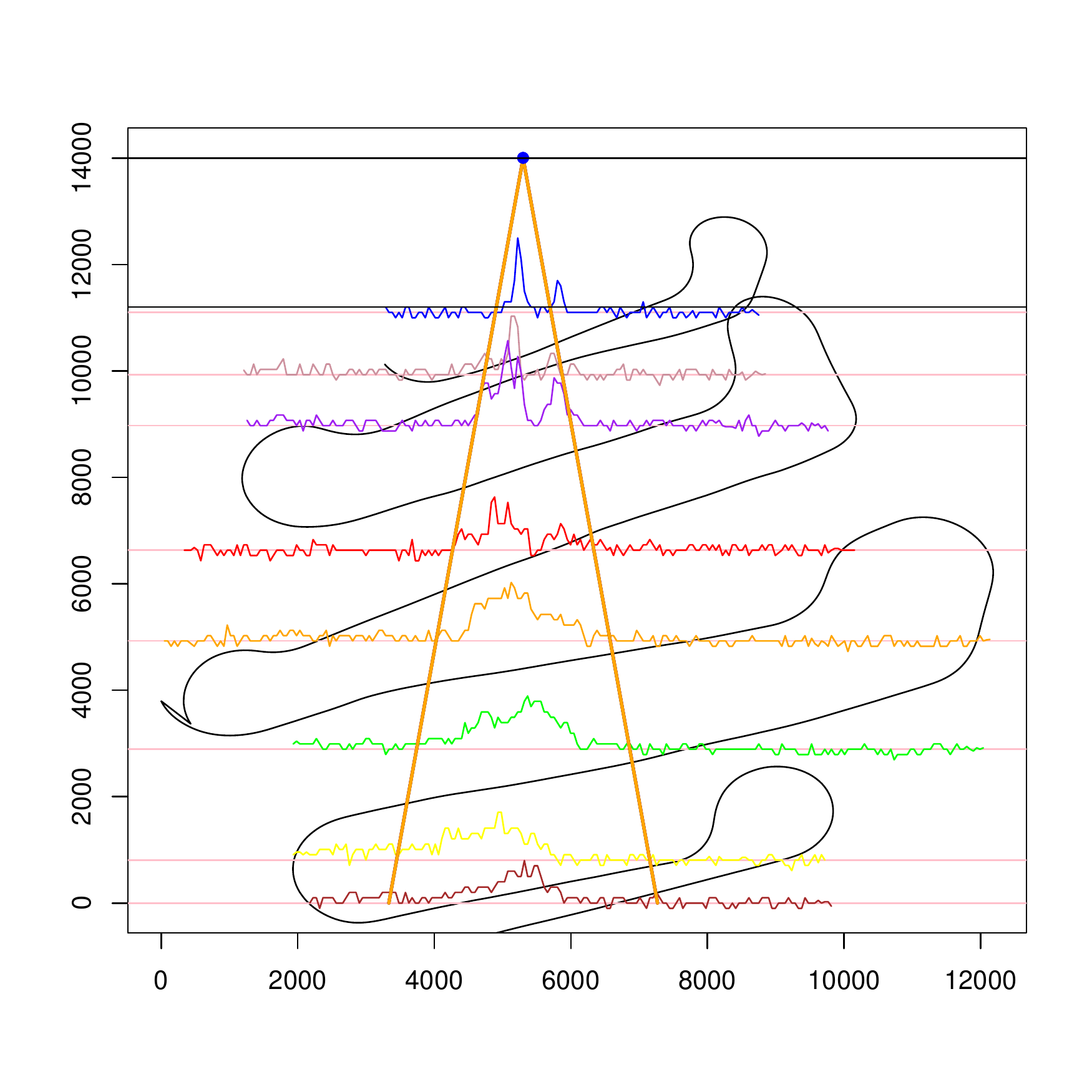}}
\end{subfloat}
\begin{subfloat}[Estimated cloud based on the projection procedure]
	{\includegraphics[width=0.48\textwidth]{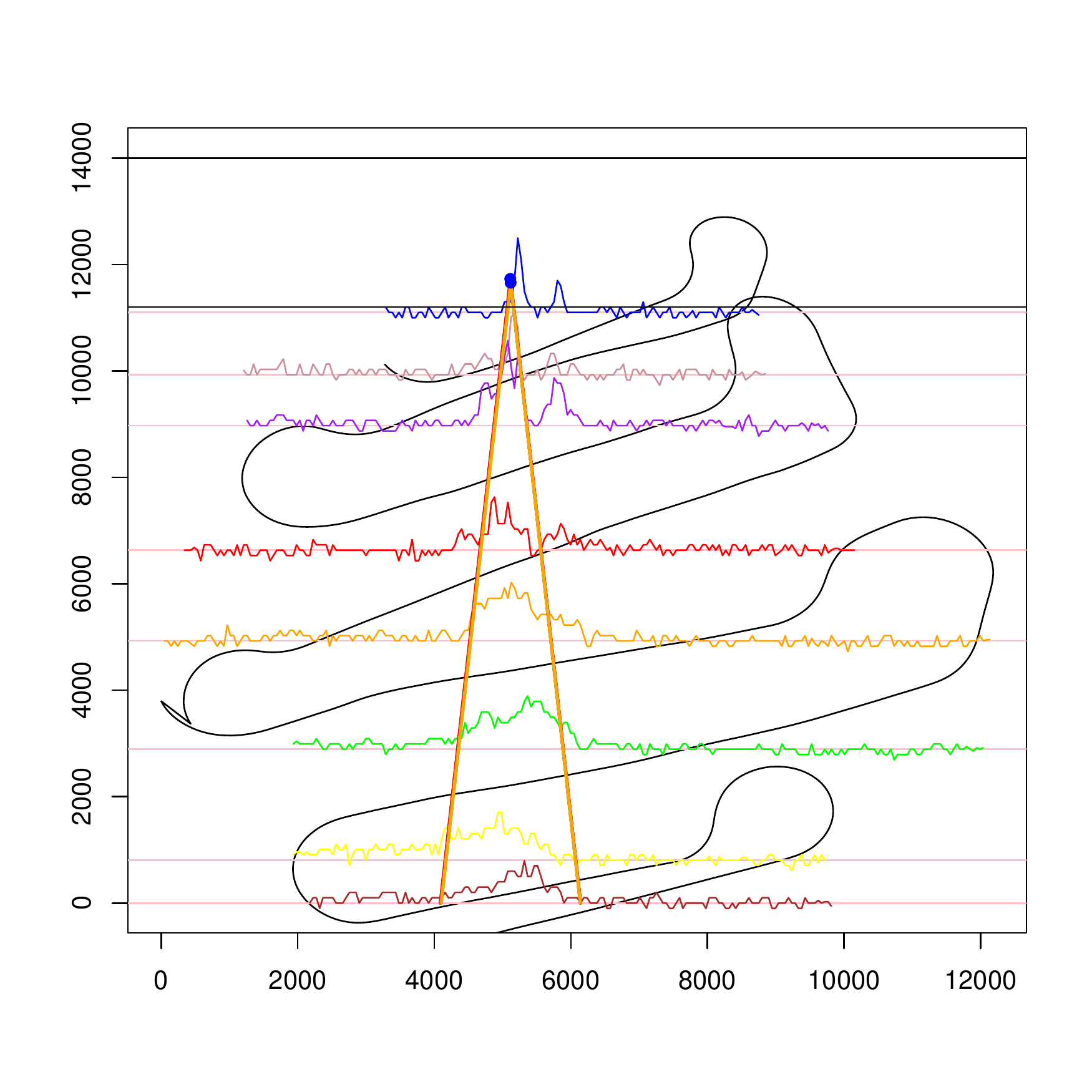}}
\end{subfloat}
\caption{Estimated opening angles {\cred with unknown opening angle} for the right trajectory after taking the wind change into account}
\label{right_traj_1b}
\end{figure}

For the analysis of the right trajectory we proceed analogously as for the left trajectory as described in Section~\ref{sec:data} noting that the results concerning the dependency structure are similar to the left trajectory (see Section 18 in~\cite{silke_diss}). The estimated clouds are given in Figure~\ref{right_traj_1}. A visual inspection shows that the cloud seems to have moved to the right in the lower legs {\cred such that the analysis of this data example with a linear cloud is an illustration how the method adapts to misspecification of the shape of the cloud}.  As already suggested by \cite{Hirst2013} this effect is likely to stem from a change in wind direction between the fourth and the fifth leg of the right flight path (some of the measurement locations are up to 15 km away).

Both of our estimation procedures (being bound to a linear cloud shape) compensate by choosing wider opening angles, while in the original Bayesian analysis of \cite{Hirst2013} their procedure compensated by suggesting a third spurious  source.  

{\cred Instead of using a more complicated cloud shape in the analysis, we take this } wind change into  account  by using different rotating angles in the preprocessing  which leads to a better aligned signal and consequently to more precise estimators (see Figure~\ref{right_traj_1b}). Looking at the heat maps in Figure~\ref{right_traj_2}, more pronounced for the projection statistic there seems to be two modes (i.e.\ areas with higher statistical values) which may still be an artefact of that wind change. Similarly to the identifiability issue along the $y$-axis this suggests that the heatmap is in fact a useful visualisation tool for this type of analysis for the location of the gas emission source based on aerial-sensed data.
\begin{figure}[tb]
\begin{subfloat}[Heatmap from the multivariate procedure]
{\includegraphics[width=0.98\textwidth]{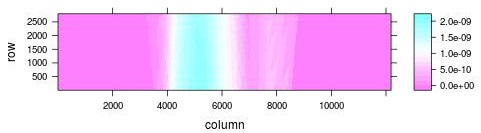}}
\end{subfloat}
\begin{subfloat}[Heatmap from the multivariate procedure]
{\includegraphics[width=0.98\textwidth]{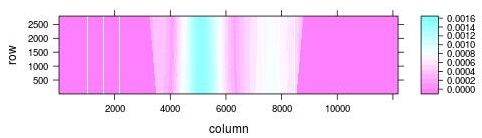}}
\end{subfloat}
\caption{Heatmaps for the right trajectory after taking the wind change into account}
\label{right_traj_2}
\end{figure}

 \end{document}